\def\fulldetails{1}
\def\icalpstyle{0}
\title{Testing probability distributions underlying aggregated data}
\author{
  \cnote{Cl\'{e}ment Canonne}\thanks{Columbia University. Email: \email{ccanonne@cs.columbia.edu}}
  \and \rnote{Ronitt Rubinfeld}\thanks{CSAIL, MIT and the Blavatnik School of Computer Science, Tel Aviv University. Email: \email{ronitt@csail.mit.edu}. Research supported by NSF grants CCF-1217423 and CCF-1065125.}
}
\date{February 15, 2014}
\newcommand{\pdfsamp}{dual\xspace}
\newcommand{\cdfsamp}{cumulative dual\xspace}
\newcommand{\Pdfsamp}{Dual\xspace}
\newcommand{\Cdfsamp}{Cumulative Dual\xspace}
\newcommand{\D}{\ensuremath{D}}
\begin{document}

\maketitle

\begin{abstract}
In this paper, we analyze and study a hybrid model for testing and learning probability distributions. Here, in addition to samples, the testing algorithm is provided with one of two different types of oracles to the unknown distribution $\D$ over $[n]$. More precisely, we define both the \emph{\pdfsamp} and \emph{\cdfsamp access models}, in which the algorithm $A$ can both sample from $\D$ and respectively, for any $i\in[n]$,
\begin{itemize}
  \item query the probability mass $D(i)$ \emph{(query access)}; or
  \item get the total mass of $\{1,\dots,i\}$, i.e. $\sum_{j=1}^i D(j)$ \emph{(cumulative access)}
\end{itemize}
These two models, by generalizing the previously studied sampling and query oracle models, allow us to bypass the strong lower bounds established for a number of problems in these settings, while capturing several interesting aspects of these problems -- and providing new insight on the limitations of the models. Finally, we show that while the testing algorithms can be in most cases strictly more efficient, some tasks remain hard even with this additional power.
\end{abstract}

\section{Introduction}
  \subsection{Background}
  
        Given data sampled from a population or an experiment, understanding the distribution from which it has been drawn is a fundamental problem in statistics, and one which has been extensively studied for decades. However, it is only rather recently that these questions have been considered when the distribution is over a \emph{very large} domain (see for instance \cite{BFRSW00,GRexp:00,Ma:81:Physics}). In this case, the usual techniques in statistics and learning theory become impractical, motivating the search for better algorithms, in particular by relaxing the goals so that learning is not required. This is useful in many real-world applications where only a particular aspect of the distribution is investigated, such as estimating the entropy or the distance between two distributions. In these examples, as well as many others, one \emph{can} achieve sublinear sample complexity. However, strong lower bounds show that the complexity of these tasks is still daunting, as it has polynomial, and often nearly linear, dependence on the size of the support of the distribution. To address this difficulty, new lines of research have emerged.  One approach is to obtain more efficient algorithms for special classes of distributions. For instance, improved algorithms whose sample complexity is polylogarithmic in the domain size can be achieved by requiring it to satisfy specific smoothness assumptions, or to be of a convenient shape (monotone, unimodal, or a ``$k$-histogram'' \cite{BKR:04,ILR12,DDSVV:13}).
A different approach applies to general distributions, but gives the algorithm more power in form of more flexible access to the distribution: as in many applications the data has already been collected and aggregated, it may be reasonable to assume that the testing algorithm can make other limited queries to the probability density function. For example, the algorithm may be provided with query access to the probability density function of the distribution
\cite{RubinfeldServedio:09}, or samples from conditional distributions induced by the original distribution \cite{CFGM:13,CRS12,CRS14}.
    
  \subsection{Our model: \pdfsamp and \cdfsamp oracles}  
  In this work, we consider the power of two natural oracles.
The first is a \emph{dual oracle}, which combines the standard model for distributions and the familiar one commonly assumed for testing Boolean and real-valued functions. In more detail, the testing algorithm is granted access to the unknown distribution $\D$ through two independent oracles, 
one providing samples of the distribution, while the other, on query $i$ in the domain of the distribution, provides the value of the probability
density function at $i$. \footnotemark{}

\footnotetext{Note that in both definitions, one can decide to disregard the corresponding evaluation oracle, which in effect amounts to falling back to the standard sampling model; moreover, for our domain $[n]$, any $\EVAL_\D$ query can be simulated by (at most) two queries to a $\CDFEVAL_\D$ oracle -- in other terms, the \cdfsamp model is at least as powerful as the \pdfsamp one.}
    
    \begin{definition}[\Pdfsamp access model]
    Let $\D$ be a fixed distribution over $[n]\newer{=\{1,\dots,n\}}$. A \emph{\pdfsamp oracle for $\D$} is a pair of oracles $(\SAMP_\D, \EVAL_\D)$ defined as follows: when queried, the \emph{sampling} oracle $\SAMP_\D$ returns an element $i\in[n]$, where the probability that $i$ is returned is $\D(i)$ independently of all previous calls to any oracle; while the \emph{evaluation} oracle $\EVAL_\D$ takes as input a query element $j\in[n]$, and returns the probability weight $\D(j)$ that the distribution puts on $j$.
    \end{definition}
    It is worth noting that this type of \pdfsamp access to a distribution has been considered (under the name \emph{combined oracle}) in \cite{BDK+:05} and \cite{GMV:05}, where they address the task of estimating (multiplicatively) the entropy of the distribution, or the $f$-divergence between two of them (see Sect.~\ref{sec:supportsize:entropy} for a discussion of their results).\medskip
    
    The second oracle that we consider provides samples of the distribution as well as queries to the \emph{cumulative distribution function} (cdf) at any point in the domain\footnotemark{}. 
    \footnotetext{We observe that such a cumulative evaluation oracle $\CDFEVAL$ appears in \cite{BKR:04} (Sect.~8).}
    \begin{definition}[\Cdfsamp access model]
    Let $\D$ be a fixed distribution over $[n]$. A \emph{\cdfsamp oracle for $\D$} is a pair of oracles $(\SAMP_\D, \CDFEVAL_\D)$ defined as follows: the \emph{sampling} oracle $\SAMP_\D$ behaves as before, while the \emph{evaluation} oracle $\CDFEVAL_\D$ takes as input a query element $j\in[n]$, and returns the probability weight that the distribution puts on $[j]$, that is
    $\D([j])=\sum_{i=1}^j D(i)$ .
    \end{definition}    
    
  \subsection{Motivation and discussion}\label{sec:motivation}
  As a first motivation to this hybrid model, consider the following scenario: There is a huge and freely available dataset, which a computationally-limited party -- call it Arthur -- needs to process. Albeit all the data is public and Arthur can view any element of his choosing, extracting further information from the dataset (such as the number of occurrences of a particular element) takes too much time. However, a third-party, Merlin, has already spent resources in preprocessing this dataset and is willing to disclose such information~--~yet at a price. This leaves Arthur with the following question: \emph{how can he get his work done as quickly as possible, paying as little as possible?} This type of question is captured by our new model, and can be analyzed in this framework. For instance, if the samples are stored in sorted order, implementing either of our oracles becomes possible with only a logarithmic overhead per query. \newer{It is worth noting that Google has published their $N$-gram models, which describe their distribution model on 5-word sequences in the English language. In addition, they have made available the texts on which their model was constructed. Thus, samples of the distribution in addition to query access to probabilities of specific domain elements may be extracted from the Google model.}
  
  A second and entirely theoretical motivation for studying distribution testing in these two dual oracle settings arises from attempting to understand the limitations and underlying difficulties of the standard sampling model. Indeed, by circumventing the lower bound, one may get a better grasp on the core issues whence the hardness stemmed in the first place.
  
  Another motivation arises from data privacy, when a curator administers a database of highly sensitive records (e.g, healthcare information, or financial records). 
  Differential privacy \cite{DN:03,DN:04,Dwork:08} studies mechanisms which allow the curator to release relevant information about its database without without jeopardizing the privacy of the individual records.
  In particular, mechanisms have been considered that enable the curator to \emph{release} a sanitized approximation $\tilde{\D}$ of its database $\D$, which ``behaves'' essentially the same for all queries of a certain type -- such as \emph{counting} or \emph{interval queries}\footnotemark{} \cite{BLR:13}.
   \newer{Specifically, if the user needs to test a property of a database, it is sufficient to test whether the sanitized database has the property, using now both samples and interval (i.e., $\CDFEVAL$) or counting ($\EVAL$) queries. As long as the tester has some tolerance (in that it accepts databases that are close to having the property), it is then possible to decide whether the true database itself is close to having the property of interest.}
   \footnotetext{A counting query is of the form ``how many records in the database satisfy predicate $\chi$?'' -- or, equivalently, ``what is the probability that a random record drawn from the database satisfies $\chi$?''.}
     
  Finally, a further motivation is the tight connection between the \pdfsamp access model and the \emph{data-stream model}, as shown by Guha et al. (\cite{GMV:05}, Theorem 25): more precisely, they prove that any (multiplicative) approximation algorithm for a large class of functions of the distribution (functions that are invariant by relabeling of any two elements of the support) in the \pdfsamp access model yields a space-efficient, $\bigO{1}$-pass approximation algorithm for the same function in the data-stream model.

  \subsection{Our results and techniques}
  We focus here on four fundamental and pervasive problems in distribution testing, which are testing \emph{uniformity}, \emph{identity} to a known distribution $\D^\ast$, \emph{closeness} between two (unknown) distributions $\D_1$, $\D_2$, and finally \emph{entropy and support size}. As usual in the distribution testing literature, the notion of distance we use is the \emph{total variation distance} (or statistical distance), which is essentially the $\ell_1$ distance between the probability distributions (see Sect.~\ref{sec:preliminaries} for the formal definition). \newer{Testing closeness is thus the problem of deciding if two distributions are equal or far from each other in total variation distance; while tolerant testing aims at deciding whether they are sufficiently close versus far from each other.}
    
  As shown in Table~\ref{table:results}, which summarizes our results and compares them to the corresponding bounds for the standard sampling-only ($\SAMP$), evaluation-only ($\EVAL$) and conditional sampling ($\COND$) models, we indeed manage to bypass the aforementioned limitations of the sampling model, and give (often tight) algorithms with sample complexity either constant (with relation to $n$) or logarithmic, where a polynomial dependence was required in the standard setting.
  
  Our main finding overall is that \emph{both \pdfsamp models allow testing algorithms to significantly outperform both $\SAMP$ and $\COND$ algorithms}, either with relation to the dependence on $n$ or, for the latter, in $1/\eps$; further, these testing algorithms are \emph{significantly simpler}, both conceptually and in their analysis, and can often be made robust to some multiplicative noise in the evaluation oracle. Another key observation is that this new flexibility 
not only allows us to tell whether two distributions are close or far, but also to efficiently estimate their distance\footnote{For details on the equivalence between tolerant testing and distance estimation, the reader is referred to \cite{PRR:2006}.}.
  
  \newer{In more detail, we show that for the problem of testing equivalence between distributions, both our models allow to get rid of any dependence on $n$, with a (tight) sample complexity of $\bigTheta{1/\eps}$. The upper bound is achieved by adapting an \EVAL-only algorithm of \cite{RubinfeldServedio:09} (for identity testing) to our setting, while the lower bound is obtained by designing a far-from-uniform instance which ``defeats'' simultaneously both oracles of our models. Turning to tolerant testing of equivalence, we describe algorithms whose sample complexity is again independent of $n$, in sharp contrast with the $n^{1-o(1)}$ lower bound of the standard sampling model. Moreover, we are able to show that, at least in the \Pdfsamp access model, our quadratic dependence on \eps is optimal. The same notable improvements apply to the query complexity of estimating the support size of the distribution, which becomes constant (with relation to $n$) in both of our access models -- versus quasilinear if one only allows sampling.
  
  As for the task of (additively) estimating the entropy of an arbitrary distribution, we give an algorithm whose sample complexity is only polylogarithmic in $n$, and show that this is tight in the \Pdfsamp access model, \newer{up to the exponent of the logarithm}. Once more, this is to be compared to the $n^{1-o(1)}$ lower bound for sampling.
  }
  
  \begin{table}[h]\scriptsize\renewcommand{\arraystretch}{1.75}\centering
\ifnum\icalpstyle=1
  \begin{adjustwidth}{-.35in}{-.35in}\centering
\else
  \begin{adjustwidth}{-0.5in}{0.4in}\centering
\fi
    \begin{tabular}{||l||c||c||c||||c||c||}
    \hline\hline
    Problem            & \SAMP           & \COND  \cite{CRS12,CRS14}            & \EVAL          & \Pdfsamp          & \Cdfsamp        \\ \hline\hline
    Testing uniformity  & $\bigTheta{\frac{\sqrt{n}}{\eps^2}}$ \cite{GRexp:00,BFRSW:10,Paninski:08}  & $\tildeO{\frac{1}{\eps^{2}}}$, $\bigOmega{\frac{1}{\eps^{2}}}$ & \multirow{2}{*}{$\bigO{\frac{1}{\eps}}$ \cite{RubinfeldServedio:09}, {$\bigOmega{\frac{1}{\eps}}^\ast$} }  & \multirow{3}{*}{$\bigTheta{\frac{1}{\eps}}$ $(\dagger)$} & \multirow{3}{*}{$\bigTheta{\frac{1}{\eps}}$ $(\dagger)$}\\ \cline{2-3}
    Testing $\equiv D^\ast$ & $\tildeTheta{\frac{\sqrt{n}}{\eps^2}}$ \cite{BFFKRW:01,Paninski:08} & $\tildeO{\frac{1}{\eps^{4}}}$ & & & \\ \cline{2-4}
    Testing $D_1 \equiv D_2$ & $\bigTheta{ (\max\left(\frac{N^{2/3}}{\eps^{4/3}}, \frac{\sqrt{N}}{\eps^{2}} \right) }$ \cite{BFRSW:10,Valiant:11,CDVV:13} & $\tildeO{\frac{\log^5 n}{\eps^{4}}}$ & {$\bigOmega{\frac{1}{\eps}}^\ast$} & & \\ \hline\hline
    Tolerant uniformity & \parbox{40mm}{\centering $\bigO{\frac{1}{(\eps_2-\eps_1)^2}\frac{n}{\log n}}$ \cite{ValiantValiant:11,ValiantValiant:10ub} $\bigOmega{\frac{n}{\log n}}$ \cite{ValiantValiant:11,ValiantValiant:10lb} \strut} & $\tildeO{\frac{1}{(\eps_2-\eps_1)^{20}}}$       & \multirow{3}{*}{{$\bigOmega{\frac{1}{(\eps_2-\eps_1)^{2}}}^\ast$}} & \multirow{3}{*}{$\bigTheta{\frac{1}{(\eps_2-\eps_1)^{2}}}$ $(\dagger)$} & \multirow{3}{*}{$\bigO{\frac{1}{(\eps_2-\eps_1)^{2}}}$ $(\dagger)$} \\ \cline{2-3}
    Tolerant $D^\ast$  & \multirow{2}{*}{$\bigOmega{\frac{n}{\log n}}$ \cite{ValiantValiant:11,ValiantValiant:10lb}}  & {\cellcolor{gray!25}}~ & & &   \\ 
    Tolerant $D_1, D_2$ & & {\cellcolor{gray!25}~}  & &  &   \\ \hline\hline
     \parbox{25mm}{ Estimating entropy to $\pm\Delta$ \strut }      & $\bigTheta{\frac{n}{\log n}}$ \cite{ValiantValiant:11,ValiantValiant:10lb} & \cellcolor{gray!25}~  & \cellcolor{gray!25}~ & $\bigO{\frac{\log^2\frac{n}{\Delta}}{\Delta^{2}}}$  $(\dagger)$, $\bigOmega{\log n}$ & $\bigO{\frac{\log^2\frac{n}{\Delta}}{\Delta^{2}}}$  $(\dagger)$     \\ \hline
    \parbox{25mm}{ Estimating support size to $\pm\eps n$ \strut } & $\bigTheta{\frac{n}{\log n}}$ \cite{ValiantValiant:11,ValiantValiant:10lb} & \cellcolor{gray!25}~  & \cellcolor{gray!25}~\ignore{ \new{$\bigTheta{\frac{1}{\eps^{2}}}$} $(\star)$ }        & $\bigTheta{\frac{1}{\eps^{2}}}$             & $\bigO{\frac{1}{\eps^{2}}}$           \\ \hline\hline
    \end{tabular}
\end{adjustwidth}
    \caption{\label{table:results}\small Summary of results. $(\dagger)$ \ignore{(resp. $(\ddagger)$)} stands for ``robust to multiplicative noise''\ignore{, while $(\star)$ indicates mild restrictions}. The bounds with an asterisk are those which, in spite of being for different models, derive from the results of the last two columns.}
\ifnum\icalpstyle=1
  \vspace{-2\baselineskip}
\fi
\end{table}

While it is not clear, looking at these problems, whether the additional flexibility that the \Cdfsamp access grants over the \Pdfsamp one can \emph{unconditionally} yield strictly more sample-efficient testing algorithms, we do provide a separation between the two models in Sect.~\ref{sec:supportsize:entropy:monotone} by showing an exponential improvement in the query complexity for estimating the entropy of a distribution given the promise that the latter is (close to) monotone. This leads us to suspect that for the task of testing monotonicity (for which we have preliminary results), under a structural assumption on the distribution, or more generally for properties intrinsically related to the underlying total order of the domain, such a speedup holds.  Moreover, we stress out the fact that our $\bigOmega{1/(\eps_2-\eps_1)^2}$ lower bound for tolerant identity testing does not apply to the \Cdfsamp setting.

One of the main techniques we use for algorithms in the \pdfsamp model is a general approach\footnotemark{} for estimating very efficiently any quantity of the form
$\shortexpect_{i\sim \D}\left[ \Phi(i,\D(i)) \right]$, for any \emph{bounded} function $\Phi$. In particular, in the light of our lower bounds, this technique is both an intrinsic and defining feature of the \Pdfsamp model, as it gives essentially tight upper bounds for the problems we consider.

On the other hand, for the task of proving lower bounds, we no longer can take advantage of the systematic characterizations known for the sampling model 
 (see e.g.~\cite{BarYossef:2002}, Sect. 2.4.1). For this reason, we have to rely on reductions from known-to-be-hard problems (such as estimating the bias of a coin), or prove indistinguishability in a \newer{\emph{customized}} fashion.
\footnotetext{We note that a similar method was utilized in~\cite{BDK+:05}, albeit in a less systematic way.}
  
  \subsection{Organization}
  After the relevant definitions and preliminaries in Sect.~\ref{sec:preliminaries}, we \newer{pursue by considering the first three problems of testing equivalence of distributions} in Sect.~\ref{sec:uniformity:equivalence:identity}, where we \newer{describe} our testing upper and lower bounds. \newer{We then turn} to the harder problem of \emph{tolerant} testing. Finally, we tackle in Sect.~\ref{sec:supportsize:entropy} the task of performing entropy and support size estimation, and give for the latter matching upper and lower bounds.

\section{Preliminaries}\label{sec:preliminaries}
We consider discrete probability distributions over the subset of integers $[n]=\{1,\dots,n\}$. As aforementioned, the notion of distance we use between distributions $\D_1$, $\D_2$ is their \emph{total variation distance}, defined as
  \[ \totalvardist{\D_1}{\D_2}\eqdef\max_{S\subseteq[n]}\left( D_1(S)-D_2(S)\right) = \frac{1}{2}\sum_{i\in[n]}\abs{\D_1(i)-\D_2(i)}.\]
Recall that any property $\mathcal{P}$ can equivalently be seen as the subset of distributions that satisfy it; in particular, the distance $\totalvardist{\D}{\mathcal{P}}$ from some $\D$ to $\mathcal{P}$ is the minimum distance to any distribution in this subset, $\min_{ \D^\prime\in\mathcal{P} } \totalvardist{\D}{\D^\prime}$.\smallskip

\noindent Testing algorithms for distributions over $[n]$ are defined as follows\footnotemark:
\begin{definition}\label{def:testing:alg}
  Fix any property $\mathcal{P}$ of distributions, and let $\ORACLE_D$ be an oracle providing some type of access to $D$. A \emph{$q$-query testing algorithm for $\mathcal{P}$} is a randomized algorithm $\Tester$ which takes as input $n$, $\eps\in(0,1]$, as well as access to $\ORACLE_D$. After making at most $q(\eps,n)$ calls to the oracle, \Tester outputs either \accept or \reject, such that the following holds:
  \begin{itemize}
    \item if $\D\in\mathcal{P}$, \Tester outputs \accept with probability at least $2/3$;
    \item if $\totalvardist{\D}{\mathcal{P}} \geq \eps$, \Tester outputs \reject with probability at least $2/3$.
  \end{itemize}
\end{definition}
We shall also be interested in \emph{tolerant} testers -- \newer{roughly}, algorithms robust to a relaxation of the first item above:
\begin{definition}\label{def:tol:testing:alg}
  Fix property $\mathcal{P}$ and $\ORACLE_D$ as above. A \emph{$q$-query tolerant testing algorithm for $\mathcal{P}$} is a randomized algorithm $\Tester$ which takes as input $n$, $0 \leq \eps_1 < \eps_2 \leq 1$, as well as access to $\ORACLE_D$. After making at most $q(\eps_1,\eps_2,n)$ calls to the oracle, \Tester outputs either \accept or \reject, such that the following holds:
  \begin{itemize}
    \item if $\totalvardist{\D}{\mathcal{P}} \leq \eps_1$, \Tester outputs \accept with probability at least $2/3$;
    \item if $\totalvardist{\D}{\mathcal{P}} \geq \eps_2$, \Tester outputs \reject with probability at least $2/3$.
  \end{itemize}
\end{definition}
Observe in particular that if $\totalvardist{\D}{\mathcal{P}} \in (0,\eps)$ (resp. $\totalvardist{\D}{\mathcal{P}} \in (\eps_1,\eps_2)$), the tester's output can be arbitrary. Furthermore, we stress that the two definitions above only deal with the query complexity, and not the running time. However, it is worth noting that while our lower bounds hold even for such computationally unbounded algorithms, all our upper bounds are achieved by testing algorithms whose running time is polynomial in the number of queries they make.
\ifnum\fulldetails=1
\footnotetext{Note that, as standard in property testing, the threshold $2/3$ is arbitrary: any $1-\delta$ confidence can be achieved at the cost of a multiplicative factor $\log(1/\delta)$ in the query complexity, by repeating the test and outputting the majority vote.}
\else
\footnotetext{Note that, as standard in property testing, the threshold $2/3$ is arbitrary: any $1-\delta$ confidence can be achieved at the cost of a multiplicative factor $\log(1/\delta)$ in the query complexity.}
\fi
  
  \begin{remark}
  We will sometimes refer as a \emph{multiplicatively noisy} $\EVAL_\D$ (or similarly for $\CDFEVAL_\D$) to an evaluation oracle with takes an additional input parameter $\tau > 0$ and returns a value $\hat{d}_i$ within a multiplicative factor $(1+\tau)$ of the true $\D(i)$. \newer{Note however that this notion of noisy oracle does not preserve the two-query simulation of a \pdfsamp oracle by a \cdfsamp one.}
  \end{remark}

Finally, recall the following well-known result on distinguishing biased coins (which can for instance be derived from Eq.~(2.15) and~(2.16) of~\cite{AdellJodra:06}), that shall come in handy in proving our lower bounds:
\begin{fact}\label{fact:fair:biased:coin}
Let $p\in[\eta, 1-\eta]$ for some fixed $\eta > 0$, and suppose $m\leq\frac{c_\eta}{\eps^2}$, with $c_\eta$ a sufficiently small constant
and $\eps < \eta.$ Then,
\[ \totalvardist{ \binomial{m}{p} }{ \binomial{m}{p+\eps} } < \frac{1}{3}. \]
\end{fact}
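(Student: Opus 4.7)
My plan is to reduce from total variation between binomials to total variation between the underlying product measures, then bound that via KL divergence. Concretely, I will first invoke the data processing inequality: since $\binomial{m}{p}$ is the pushforward of $\bernoulli{p}^{\otimes m}$ under the summation map, we have
\[ \totalvardist{\binomial{m}{p}}{\binomial{m}{p+\eps}} \le \totalvardist{\bernoulli{p}^{\otimes m}}{\bernoulli{p+\eps}^{\otimes m}}. \]
From there, Pinsker's inequality gives $\totalvardist{\cdot}{\cdot}^2 \le \tfrac{1}{2} D_{\mathrm{KL}}$, and KL tensorizes additively over product measures, so I can work with
\[ D_{\mathrm{KL}}\bigl(\bernoulli{p}^{\otimes m} \,\big\|\, \bernoulli{p+\eps}^{\otimes m}\bigr) = m \cdot D_{\mathrm{KL}}\bigl(\bernoulli{p} \,\big\|\, \bernoulli{p+\eps}\bigr). \]

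The next step is the Bernoulli KL estimate. Writing out $D_{\mathrm{KL}}(\bernoulli{p} \| \bernoulli{p+\eps}) = p \ln\frac{p}{p+\eps} + (1-p)\ln\frac{1-p}{1-p-\eps}$ and Taylor-expanding $\ln(1+x)$ around $0$, the linear terms cancel, leaving a leading contribution of $\frac{\eps^2}{2p(1-p)} + O(\eps^3/(p(1-p))^2)$. Using $p \in [\eta,1-\eta]$ yields $p(1-p) \ge \eta(1-\eta) \ge \eta/2$ (say), and since $\eps < \eta$ the cubic remainder is absorbed into the quadratic term, giving an explicit bound $D_{\mathrm{KL}}(\bernoulli{p} \| \bernoulli{p+\eps}) \le K_\eta \cdot \eps^2$ for a constant $K_\eta$ depending only on $\eta$.

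Combining everything, $\totalvardist{\binomial{m}{p}}{\binomial{m}{p+\eps}}^2 \le \tfrac{1}{2} m K_\eta \eps^2 \le \tfrac{1}{2} K_\eta c_\eta$, which is strictly less than $1/9$ provided we choose $c_\eta \le 2/(9 K_\eta)$. This gives the desired $\totalvardist{\binomial{m}{p}}{\binomial{m}{p+\eps}} < 1/3$. The only slightly delicate step is the Taylor-expansion bookkeeping on the Bernoulli KL: one must verify that the $O(\eps^3)$ remainder is truly controlled uniformly in $p \in [\eta,1-\eta]$ when $\eps < \eta$, so that both $p$ and $p+\eps$ stay bounded away from $0$ and $1$ and the logarithms behave nicely — but this is routine given the assumption $\eps < \eta$ and the fact that $p + \eps \in [\eta, 1)$ as well. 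Alternatively, one can skip the expansion by quoting the bounds of Adell–Jodra~\cite{AdellJodra:06} directly, which give a sharp constant in the same shape.
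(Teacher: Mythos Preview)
The paper does not actually prove this fact; it records it as well known and points to Adell--Jodr\'a for a derivation. Your Pinsker-plus-tensorization route is a standard, self-contained alternative, and the overall strategy is sound.

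There is one genuine slip in the bookkeeping, though. You assert that under $p\in[\eta,1-\eta]$ and $\eps<\eta$ both $p$ and $p+\eps$ stay bounded away from $0$ and $1$, so that the Taylor remainder in $D_{\mathrm{KL}}\bigl(\bernoulli{p}\,\big\|\,\bernoulli{p+\eps}\bigr)$ is uniformly controlled. But $p+\eps$ is only guaranteed to lie in $(\eta,1)$: when $p$ is close to $1-\eta$ and $\eps$ is close to $\eta$, the quantity $1-p-\eps$ can be arbitrarily small, and then the term $(1-p)\ln\frac{1-p}{1-p-\eps}$ diverges. So no constant $K_\eta$ satisfying $D_{\mathrm{KL}}\bigl(\bernoulli{p}\,\big\|\,\bernoulli{p+\eps}\bigr)\le K_\eta\,\eps^2$ exists uniformly over the stated range, and your uniform-remainder claim fails exactly there.

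The fix is immediate: since total variation is symmetric, apply Pinsker with the arguments swapped. Using the chi-square upper bound on KL,
\[
D_{\mathrm{KL}}\bigl(\bernoulli{p+\eps}\,\big\|\,\bernoulli{p}\bigr)\;\le\;\chi^2\bigl(\bernoulli{p+\eps}\,\big\|\,\bernoulli{p}\bigr)\;=\;\frac{\eps^2}{p}+\frac{\eps^2}{1-p}\;=\;\frac{\eps^2}{p(1-p)}\;\le\;\frac{\eps^2}{\eta(1-\eta)},
\]
which \emph{is} uniform because the reference measure now has parameter $p\in[\eta,1-\eta]$. Taking $K_\eta=1/(\eta(1-\eta))$ and $c_\eta<2\eta(1-\eta)/9$, the rest of your argument goes through verbatim.
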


\noindent We shall make extensive use of Chernoff bounds; for completeness, we state them in Appendix~\ref{appendix:chernoff}.

\section{Uniformity and \newer{identity of} distributions}\label{sec:uniformity:equivalence:identity}
\subsection{Testing}\label{sec:uniformity:equivalence:identity:testing}
In this section, we consider the three following testing problems, each of them a generalization of the previous:
\begin{description}
  \item[Uniformity testing:] given oracle access to $\D$, decide whether $\D=\uniform$ (the uniform distribution on $[n]$) or is far from it;
  \item[Identity testing:] given oracle access to $\D$ and the full description of a fixed $\D^\ast$, decide whether $\D=\D^\ast$ or is far from it; 
  \item[Closeness testing:] given independent oracle accesses to $\D_1$, $\D_2$ (both unknown), decide whether $\D_1=\D_2$ or $\D_1$, $\D_2$ are far from each other. 
\end{description}

\noindent We begin by stating here two results from the literature that transpose straighforwardly in our setting. \new{Observe that since the problem of testing closeness between two unknown distributions $\D_1,\D_2$ in particular encompasses the identity to known $\D^\ast$ testing (and a fortiori the uniformity testing) one, this upper bound automatically applies to these as well.}
\begin{theorem}[Theorem 24 from \cite{RubinfeldServedio:09}]\label{theorem:testing:equivalence:dknown:rs09}
In the query access model, there exists a tester for identity to a known distribution $\D^\ast$ with query complexity $\bigO{\frac{1}{\eps}}$.
\end{theorem}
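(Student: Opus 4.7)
The plan is to describe a very simple algorithm that exploits the fact that in the \EVAL-only model the known distribution $\D^\ast$ contributes no query cost: since we have full knowledge of $\D^\ast$, we can simulate samples from it internally. Concretely, I would draw $m = O(1/\eps)$ i.i.d.\ indices $i_1,\dots,i_m \sim \D^\ast$, query $\EVAL_{\D}(i_j)$ for each, and \reject iff $\D(i_j) \neq \D^\ast(i_j)$ for some $j$; otherwise \accept. The total query complexity is exactly $m = O(1/\eps)$.

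Completeness is immediate: if $\D = \D^\ast$, every comparison succeeds. The heart of the proof is soundness, which I would reduce to the following observation. Let $B = \{\, i \in [n] : \D(i) < \D^\ast(i)\,\}$. Since $\D$ and $\D^\ast$ both sum to $1$, the standard decomposition of total variation yields
\[
  \totalvardist{\D}{\D^\ast} \;=\; \sum_{i \in B}\bigl(\D^\ast(i) - \D(i)\bigr).
\]
Because each summand is bounded above by $\D^\ast(i)$, I get $\sum_{i \in B} \D^\ast(i) \geq \totalvardist{\D}{\D^\ast} \geq \eps$. Thus a single draw from $\D^\ast$ lands in $B$ (and hence witnesses $\D(i) \neq \D^\ast(i)$) with probability at least $\eps$.

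Conditional on $\totalvardist{\D}{\D^\ast}\geq \eps$, the probability that none of the $m$ independent draws is a witness is at most $(1-\eps)^m \leq e^{-\eps m}$, so choosing $m = \clg{\ln 3/\eps} = O(1/\eps)$ drives this below $1/3$, giving soundness with probability at least $2/3$ as required by Definition~\ref{def:testing:alg}. Boosting to arbitrary confidence is standard.

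The one subtlety I expect to need to flag is that the algorithm relies on the evaluation oracle returning $\D(i)$ \emph{exactly}, so that $\D(i)=\D^\ast(i)$ can be tested as a strict equality; this is why the $(\dagger)$ robustness label in Table~\ref{table:results} does not apply to the asterisked \EVAL upper bound inherited here. Other than that, the argument is entirely elementary, and no concentration inequality beyond the union bound is required. The only step that requires a moment's thought is the variance-free identity $\totalvardist{\D}{\D^\ast} = \sum_{i \in B}(\D^\ast(i)-\D(i))$, which is the crucial ingredient that turns an $O(1/\eps^2)$ Hoeffding-style bound into the optimal $O(1/\eps)$ union bound.
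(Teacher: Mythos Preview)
Your proof is correct, and it is essentially the approach the paper attributes to \cite{RubinfeldServedio:09}. Note, however, that the present paper does not itself give a proof of this statement: it is quoted as Theorem~24 of \cite{RubinfeldServedio:09} and only commented on (``the tester given in \cite{RubinfeldServedio:09} is neither tolerant nor robust; however, it only uses query access''). The informal description the paper gives of the adaptation in Theorem~\ref{theorem:testing:equivalence:d1:d2:crs12} --- ``drawing samples from both distributions and then querying the respective probability mass both distributions put on them, hoping to detect a violation'' --- is exactly your algorithm specialized to the case where one distribution is known and can be sampled internally, so your write-up is faithful to the intended argument.
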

Note that the tester given in \cite{RubinfeldServedio:09} is neither tolerant nor robust; however, it only uses query access. \cite{CRS12} later adapt this algorithm to give a tester for closeness between two unknown distributions, in a setting which can be seen as ``relaxed'' \pdfsamp access model\footnote{\newer{In the sense that the evaluation oracle, being simulated via another type of oracle, is not only noisy but also allowed to err on a small set of points.}}:
\begin{theorem}[Theorem 12 from \cite{CRS12}]\label{theorem:testing:equivalence:d1:d2:crs12}
In the \pdfsamp access model, there exists a tester for closeness between two unknown distributions $\D_1$, $\D_2$ with sample complexity $\bigO{\frac{1}{\eps}}$.
\end{theorem}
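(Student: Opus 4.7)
The plan is to give a very simple direct algorithm exploiting the full \pdfsamp access to both distributions simultaneously, rather than adapting the RS09 tester through a ``relaxed'' simulation as in \cite{CRS12}. Concretely, draw $m = \lceil \ln 3 / \eps \rceil$ independent samples $i_1,\dots,i_m$ from $D_1$ using $\SAMP_{D_1}$; for each sampled point $i_t$, query $\EVAL_{D_1}(i_t)$ and $\EVAL_{D_2}(i_t)$; output \accept iff every one of the $m$ pairs of returned values agrees, and \reject otherwise. Clearly this algorithm makes $\bigO{1/\eps}$ queries overall.

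Completeness is immediate: if $D_1 = D_2$, then $D_1(i)=D_2(i)$ for every $i\in[n]$, and the tester always accepts. The heart of the argument is soundness. Suppose $\totalvardist{D_1}{D_2}\ge \eps$, and consider the ``excess'' set $A = \{i\in[n] : D_1(i) > D_2(i)\}$. The standard characterization $\totalvardist{D_1}{D_2} = \max_S (D_1(S)-D_2(S))$ is attained by $S=A$, so
\[
    \sum_{i\in A} D_1(i) \;\ge\; \sum_{i\in A}\bigl(D_1(i)-D_2(i)\bigr) \;=\; \totalvardist{D_1}{D_2} \;\ge\; \eps.
\]
Every point in $A$ is a ``witness'' to the disagreement, so
\[
    \probaOf{ D_1(i)\ne D_2(i) \text{ when } i\sim D_1} \;\ge\; \sum_{i\in A} D_1(i) \;\ge\; \eps.
\]
Hence the probability that none of the $m$ independent samples witnesses a disagreement is at most $(1-\eps)^m \le e^{-\eps m} \le 1/3$, yielding soundness at confidence $2/3$.

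There is no real technical obstacle here; the interesting point is conceptual. Unlike the classical sampling-only setting, where one has to build a carefully variance-controlled statistic and pay a polynomial price in $n$, the $\EVAL$ oracle turns ``testing equality of two distributions'' into a pointwise consistency check whose hitting probability is directly governed by $d_{\rm TV}$. The only care needed is to sample from $D_1$ (or, symmetrically, $D_2$) rather than from an arbitrary distribution, since it is the $D_1$-mass of the disagreement set that the variation-distance lower bound controls. Finally, the same proof goes through even under multiplicative noise of factor $1+\tau$ on both evaluation oracles, provided $\tau$ is chosen smaller than $1/2$ say: a genuine disagreement $D_1(i)\ne D_2(i)$ in $A$ still produces, with probability at least $\eps$, a gap in the noisy queries exceeding the noise tolerance, so the same witness argument applies up to constants.
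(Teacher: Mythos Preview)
Your core argument is correct and is essentially the approach the paper attributes to \cite{CRS12}: sample, evaluate both densities at the sample, and reject upon any pointwise discrepancy. The paper's version draws samples from \emph{both} $\D_1$ and $\D_2$, whereas you sample only from $\D_1$; your one-sided variant is a legitimate simplification, since (as you observe) the excess set $A=\{i:\D_1(i)>\D_2(i)\}$ already has $\D_1$-mass at least~$\eps$.

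Your closing remark on multiplicative robustness, however, does not go through. Under $(1+\tau)$-noisy $\EVAL$, a point $i\in A$ is a detectable witness only when the true ratio $\D_1(i)/\D_2(i)$ exceeds roughly $(1+\tau)^2$; the mere strict inequality $\D_1(i)>\D_2(i)$ that membership in $A$ guarantees is not enough. Concretely, with $\tau=\tfrac12$ as you suggest, take $\D_1=\uniform$ and $\D_2(i)=(1\mp 2\eps)/n$ on the two halves of $[n]$: then $\totalvardist{\D_1}{\D_2}=\eps$, yet every ratio $\D_1(i)/\D_2(i)$ lies in $\{(1-2\eps)^{-1},(1+2\eps)^{-1}\}$, well inside the noise band $[(1+\tau)^{-2},(1+\tau)^2]=[4/9,9/4]$ for small $\eps$, so an adversarial oracle can return identical values on every query. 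To make the robustness claim work one needs $\tau=O(\eps)$ together with a refined argument bounding the $\D_1$-mass of ``small-ratio'' points in $A$, not merely $\tau<\tfrac12$.
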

It is worth noting that the algorithm in question is conceptually very simple -- namely, it consists in drawing samples from both distributions and then querying the respective probability mass both distributions put on them, hoping to detect a violation.

\begin{remark}
As mentioned, the setting of the theorem is slightly more general than stated -- indeed, it only assumes ``approximate'' query access to $\D_1$, $\D_2$ (in their terminology, this refers to an evaluation oracle that outputs, on query $x\in[n]$, a good \emph{multiplicative} estimate of $\D_i(x)$, for \emph{most} of the points $x$).
\end{remark}

\paragraph{Lower bound} Getting more efficient testing seems unlikely -- the dependence on $1/\eps$ being ``as good as it gets''. The following result formalizes this, showing that indeed both Theorems~\ref{theorem:testing:equivalence:dknown:rs09} and \ref{theorem:testing:equivalence:d1:d2:crs12} are tight, even for the least challenging task of testing uniformity:
\begin{theorem}[Lower bound for \pdfsamp oracle testers]\label{theorem:testing:uniformity:lb}
In the \pdfsamp access model, any tester for uniformity must have query complexity $\bigOmega{\frac{1}{\eps}}$.
\end{theorem}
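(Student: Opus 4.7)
The plan is to exhibit a family of distributions $\eps$-far from uniform that are indistinguishable from $\uniform$ in the \pdfsamp model when fewer than $\Omega(1/\eps)$ queries are made. The natural hard instance is a ``bump-and-hole'' distribution $\D_{S,T}$ parametrized by two disjoint sets $S,T\subset[n]$, each of size $\eps n$: set $\D_{S,T}(i)=2/n$ for $i\in S$, $\D_{S,T}(i)=0$ for $i\in T$, and $\D_{S,T}(i)=1/n$ everywhere else. A direct computation gives $\totalvardist{\D_{S,T}}{\uniform}=\eps$, while most domain elements still take the uniform value $1/n$. Drawing $(S,T)$ uniformly at random among such pairs, I would argue that any tester making $q=o(1/\eps)$ queries accepts the random instance $\D_{S,T}$ and $\uniform$ with essentially the same probability, contradicting the $2/3$-correctness requirement of Definition~\ref{def:testing:alg}.

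The key step is a coupling between the ``yes world'' (oracle for $\uniform$) and the ``no world'' (oracle for a random $\D_{S,T}$). I draw a sequence of iid uniform elements $X_1,X_2,\dots\in[n]$ that serve as the $\SAMP$ responses in the yes world; in the no world, the $j$-th sample is returned as $X_j$ itself whenever $X_j\notin T$, and as an independent uniform element of $S$ otherwise. A short calculation confirms this produces the correct marginal $\D_{S,T}$, and the two coupled samples agree except on the event $\{X_j\in T\}$, which has probability exactly $\eps$. Evaluation queries are handled the only sensible way: $\EVAL$ returns $1/n$ in the yes world and $\D_{S,T}(i)$ in the no world, so these disagree precisely when $i\in S\cup T$, an event of probability roughly $2\eps$ conditional on the past transcript.

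Putting the pieces together, a union bound over the $q$ adaptive steps shows that the two transcripts remain identical with probability at least $1-O(q\eps)$. Since identical transcripts force identical outputs, the acceptance probabilities of the tester in the two worlds differ by at most $O(q\eps)$. Because the tester is required to accept in the first case and reject in the second with probability at least $2/3$ each, this gap must be at least $1/3$, forcing $q=\bigOmega{1/\eps}$.

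The main subtlety will be in the per-step conditional probability bounds: since the tester is adaptive, the $k$-th query $i_k$ is a function of the past responses and hence of $(S,T)$, and conditioning on ``no divergence yet'' distorts the posterior of $S\cup T$. I plan to handle this by an inductive argument along the execution: as long as $q\ll \eps n$, the conditioning removes at most $O(q)$ domain elements from consideration, so the conditional probability that $i_k\in S\cup T$ stays within a constant factor of $2\eps$; the analogous bound for $\SAMP$ is immediate, since each fresh $X_k$ is drawn independently of the hidden pair $(S,T)$ and $|T|/n=\eps$.
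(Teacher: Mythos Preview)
Your approach is correct and constitutes a genuinely different route from the paper's. The paper does not prove Theorem~\ref{theorem:testing:uniformity:lb} directly; instead it proves the stronger Theorem~\ref{theorem:testing:uniformity:lb:cdfquery} (the $\CDFEVAL$ case) and deduces the $\EVAL$ case. To make the construction survive $\CDFEVAL$ queries, the paper uses a \emph{single} heavy spike of weight $\eps+\tfrac{1}{n}$ followed by a contiguous block of $\eps n$ zeros, all shifted by a uniformly random offset~$r$; the analysis then splits the interaction into three phases ($\EVAL$, then $\SAMP$, then $\EVAL$) and union-bounds the events that any phase touches the perturbed block. Your bump-and-hole instance with $S,T$ drawn as random subsets, together with the explicit coupling of the $\SAMP$ responses, is an equally valid argument for the $\EVAL$ setting, arguably cleaner for that purpose. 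What the paper's construction buys is that it simultaneously yields the $\CDFEVAL$ lower bound: because the perturbed region is a single interval, the cdf agrees with $x\mapsto x/n$ everywhere outside that interval. Your random-subset instance cannot be reused for Theorem~\ref{theorem:testing:uniformity:lb:cdfquery}, since a single $\CDFEVAL$ query at $n/2$ would typically reveal an imbalance between the number of bumps and holes in $[n/2]$.

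One suggestion on the adaptivity step: the inductive bookkeeping you sketch (tracking how conditioning on ``no divergence yet'' distorts the posterior of $(S,T)$) works, but it is more delicate than necessary because the $\SAMP$ constraints are asymmetric (they rule points out of $T$ only). A cleaner route is to observe that the entire \emph{yes-world} transcript --- hence the full list of $\EVAL$ points $i_1,\dots,i_q$ and $\SAMP$ draws $X_1,\dots,X_q$ the tester would produce against $\uniform$ --- is a deterministic function of the tester's coins and the iid uniform $X_j$'s, and is therefore \emph{independent} of $(S,T)$. The first divergence, if any, must occur at a yes-world query (since the two worlds agree up to that point), so the divergence event is contained in $\{\exists k:\ i_k\in S\cup T\}\cup\{\exists j:\ X_j\in T\}$. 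Conditioning on the yes-world execution and then averaging over $(S,T)$ gives the clean bound $q_{\EVAL}\cdot 2\eps + q_{\SAMP}\cdot\eps \le 2q\eps$, with no need to control posteriors along the run.
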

Albeit the lower bound above applies only to the \pdfsamp model, one can slightly adapt the proof to get the following improvement:
\begin{theorem}[Lower bound for \cdfsamp oracle testers]\label{theorem:testing:uniformity:lb:cdfquery}
In the \cdfsamp access model, any tester for uniformity must have sample complexity $\bigOmega{\frac{1}{\eps}}$.
\end{theorem}
Albeit the lower bound above applies only to the \pdfsamp model, one can slightly adapt the construction to get the following improvement:
\begin{theorem}[Lower bound for \cdfsamp oracle testers]\label{theorem:testing:uniformity:lb:cdfquery}
In the \cdfsamp access model, any tester for uniformity must have sample complexity $\bigOmega{\frac{1}{\eps}}$.
\end{theorem}
\begin{proof}[Sketch]
Theorem~\ref{theorem:testing:uniformity:lb:cdfquery} directly implies Theorem~\ref{theorem:testing:uniformity:lb}, so we focus on the former. The high-level idea is to trick the algorithm by somehow ``disabling'' the additional flexibility coming from the oracles.

To do so, we start with a distribution that is far from uniform, but easy to recognize when given evaluation queries. We then shuffle its support randomly in such a way that \textsf{(a)} sampling will not, with overwhelming probability, reveal anything, while \textsf{(b)} evaluation queries essentially need to find a needle in a haystack. Note that the choice of the shuffling must be done carefully, as the tester has access to the cumulative distribution function of any \textsf{no}-instance $\D$: in particular, using a random permutation will not work. Indeed, it is crucial for the cumulative distribution function to be as close as the linear function $x\in[n] \mapsto \frac{x}{n}$ as possible; meaning that the set of elements on which $\D$ differs from $\uniform$ had better be a consecutive ``chunk'' (otherwise, looking at the value of the cdf at a uniformly random point would give away the difference with uniform with non-negligible probability: such a point $x$ is likely to have at least a ``perturbed point'' before \emph{and} after it, so that $\sum_{i \leq x} \D(x)\neq \frac{x}{n}$).

Fix any $\eps\in(0,\half]$; for $n\geq \frac{1}{\eps}$, set $m\eqdef(1-\eps)n-1$, and consider testing a distribution $\D$ on $[n]$ which is either \textsf{(a)} the uniform distribution or \textsf{(b)} chosen uniformly at random amongst the family of distributions $(\D_r)_{0\leq r \leq m}$, defined this way: for any offset $0\leq r < m$, $\D_r$ is obtained as follows:\begin{enumerate}
  \item Set $\D(1)=\eps+\frac{1}{n}$, $\D(2)=\dots=\D(\eps n+1)=0$, and $\D(k)=\frac{1}{n}$ for the remaining $m=(1-\eps)n-1$ points;
  \item Shift the whole support (modulo $n$) by adding $r$.
\end{enumerate}
At a high-level, what this does is keeping the ``chunk'' on which the cdf of the \textsf{no}-instance grouped together, and just place it at a uniformly random position; outside this interval, the cdf's are exactly the same, and the only way to detect a difference with \CDFEVAL is to make a query in the ``chunk''. Furthermore, it is not hard to see that any \textsf{no}-instance distribution will be exactly $\eps$-far from uniform, so that any tester $\Tester$ must distinguish between cases \textsf{(a)} and \textsf{(b)} with probability at least 2/3.\medskip

Suppose by contradiction that there exists a tester $\Tester$ making $q=\littleO{\frac{1}{\eps}}$ queries (\newer{without loss of generality}, we can further assume $\Tester$ makes exactly $q$ queries; and that for any $\SAMP$ query, the tester also gets ``for free'' the result of an evaluation query on the sample). Given \pdfsamp access to a $\D=\D_r$ generated as in case \textsf{(b)}, observe first that, since the outputs of the sample queries are independent of the results of the evaluation queries, one can assume that some evaluation queries are performed first, followed by some sample queries, before further evaluation queries (where the evaluation points may depend arbitrarily on the sample query results) are made. That is, we subdivide the queries in 3: first, $q_1$ consecutive $\EVAL$ queries, then a sequence of $q_2$ $\SAMP$ queries, and finally $q_3$ $\EVAL$ queries. Define the following ``bad'' events:
\begin{itemize}
  \item $E_1$: one of the first $q_1$ evaluation queries falls outside the set $G\eqdef\left\{\newer{\eps n+2+r}, \dots, n+r\right\} \mod n$;
  \item $E_2$: one of the $q_2$ sampling queries returns a sample outside $G$, conditioned on $\overline{E_1}$; 
  \item $E_3$: one of the $q_3$ evaluation queries is on a point outside $G$, conditioned on $\overline{E_1}\cap\overline{E_2}$.
\end{itemize}
It is clear that, conditioned on $\overline{E_1}\cap\overline{E_2}\cap\overline{E_3}$, all the tester sees is exactly what its view would have been in case \textsf{(a)} (probabilities equal to $\frac{1}{n}$ for any $\EVAL$ query, and uniform sample from $G$ for any $\SAMP$ one). It is thus sufficient to show that $\probaOf{ \overline{E_1}\cap\overline{E_2}\cap\overline{E_3} } = 1-\littleO{1}$.
\begin{itemize}
  \item As $r$ is chosen uniformly at random, $\probaOf{ E_1 } \leq q_1\frac{n-m}{n}=q_1(\eps+\frac{1}{n})$;
  \item since $D(G)=\frac{m}{n} = 1-\eps -\frac{1}{n} \geq 1-2\eps$, $\probaOf{ E_2 } \leq 1 - (1-2\eps)^{q_2}$;
  \item finally, $\probaOf{ E_3 } \leq q_3(\eps+\frac{1}{n})$;
\end{itemize}
we therefore have $\probaOf{ E_1 \cup E_2\cup E_3 } \leq (q_1+q_3)(\eps+\frac{1}{n}) + 1 - (1-2\eps)^{q_2} = \bigO{q\eps} = \littleO{1}$, as claimed.
\end{proof}

\subsection{Tolerant testing}\label{sec:uniformity:equivalence:identity:testing:tolerant}
In this section, we describe tolerant testing algorithms for the three problems of uniformity, identity and closeness; note that by a standard reduction (see Parnas et al. (\cite{PRR:2006}, Section 3.1), this is equivalent to estimating the distance between the corresponding distributions. As \new{hinted in the introduction, our algorithm relies on a general estimation approach that will be illustrated further in Section~\ref{sec:supportsize:entropy}, and which constitutes a fundamental feature of the \pdfsamp oracle}: namely, the ability to estimate \newer{cheaply} quantities of the form
$\shortexpect_{i\sim \D}\left[ \Phi(i,\D(i)) \right]$ for any \emph{bounded} function $\Phi$.

%
\begin{theorem}\label{theorem:tolerant:tester:uniform}
In the \pdfsamp access model, there exists a tolerant tester for uniformity with query complexity $\bigO{\frac{1}{(\eps_2-\eps_1)^2}}$.
\end{theorem}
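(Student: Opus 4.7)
The plan is to recast $\totalvardist{\D}{\uniform}$ as the expectation of a \emph{bounded} function evaluated along samples from $\D$, and then estimate that expectation by an empirical average. Concretely, split $[n]=H\disjunion L$ where $H=\setOfSuchThat{i}{\D(i)>1/n}$ and $L$ is the rest. Because $\sum_i(\D(i)-1/n)=0$, the positive and negative parts of $\D(i)-1/n$ contribute the same amount, so
\[
\totalvardist{\D}{\uniform} \;=\; \sum_{i\in H}\bigl(\D(i)-1/n\bigr) \;=\; \sum_{i\in H} \D(i)\Bigl(1-\tfrac{1}{n\D(i)}\Bigr) \;=\; \shortexpect_{i\sim\D}\bigl[\Phi(\D(i))\bigr],
\]
where $\Phi(p)\eqdef\max\!\left(0,\,1-\tfrac{1}{np}\right)$ takes values in $[0,1)$. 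This is exactly the kind of expectation flagged in the introduction as cheap to estimate in the \pdfsamp model.

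Given this identity, the algorithm is immediate: draw $m$ independent samples $i_1,\dots,i_m$ from $\SAMP_\D$, query $\EVAL_\D$ at each to obtain $\D(i_j)$, and form
\[
\hat{T} \;=\; \frac{1}{m}\sum_{j=1}^m \Phi\bigl(\D(i_j)\bigr).
\]
Since each summand lies in $[0,1]$, the Hoeffding bound (in the form stated in Appendix~\ref{appendix:chernoff}) gives $\probaOf{\abs{\hat{T}-\totalvardist{\D}{\uniform}}\ge (\eps_2-\eps_1)/3}\le 1/3$ as soon as $m=\bigOmega{1/(\eps_2-\eps_1)^2}$, which is the claimed query complexity. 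The tester then outputs \accept if $\hat T\le (\eps_1+\eps_2)/2$ and \reject otherwise; by Definition~\ref{def:tol:testing:alg}, this correctly distinguishes $\totalvardist{\D}{\uniform}\le\eps_1$ from $\totalvardist{\D}{\uniform}\ge\eps_2$.

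The only point that requires a little care is robustness to multiplicatively noisy $\EVAL_\D$ (which is the $(\dagger)$ in Table~\ref{table:results}). If the oracle returns $\hat{d}_i=(1\pm\tau)\D(i)$, then one works with $\Phi(\hat{d}_i)$ instead of $\Phi(\D(i))$; the function $\Phi$ is monotone in $p$ and bounded by $1$, and a direct computation shows that $|\Phi(\hat d_i)-\Phi(\D(i))|\le \tau/(1-\tau)$ uniformly in $i$, so the bias introduced in $\hat T$ is $O(\tau)$. Taking $\tau=\Theta(\eps_2-\eps_1)$ therefore preserves both the guarantee and the query complexity. The main (minor) obstacle in carrying the proof out is verifying that the function $\Phi$ really is bounded so that Hoeffding applies without logarithmic loss; the decomposition $A=B$ above is what makes this work, since estimating $\sum_i|\D(i)-1/n|$ directly through the ratio $|\D(i)-1/n|/\D(i)$ would be unbounded on light elements and require a more delicate argument.
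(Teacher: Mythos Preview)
Your proof is correct and follows essentially the same approach as the paper: the identity $\totalvardist{\D}{\uniform}=\shortexpect_{i\sim\D}[\Phi(\D(i))]$ with $\Phi(p)=\max(0,1-1/(np))$ is exactly the paper's expression $\bigl(1-\tfrac{1}{n\D(i)}\bigr)\indic{\D(i)>1/n}$, and both then estimate this bounded expectation by an empirical average over $\bigO{1/(\eps_2-\eps_1)^2}$ samples with a Chernoff/Hoeffding bound and threshold at $(\eps_1+\eps_2)/2$. Your added paragraph on robustness to multiplicative noise is also in line with the paper's remark following the proof.
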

\begin{proof}
  We describe such a tester $\mathcal{T}$; as it will start by estimating the quantity $2\totalvardist{\D}{\uniform}$ up to some additive $\gamma\eqdef{\eps_2-\eps_1}$ (and then accept if and only if its estimate $\hat{d}$ is at most $2\eps_1+\gamma = \eps_1+\eps_2$).
  
\noindent In order to approximate this quantity, observe that\footnote{Note that dividing by $\D(i)$ is ``legal'', since if $\D(i)=0$ for some $i\in[n]$, this point will never be sampled, and thus no division by $0$ will ever occur.}
  \begin{align}
  \totalvardist{\D}{\uniform} &= \frac{1}{2}\sum_{i\in[n]} \abs{\D(i)-\frac{1}{n}} = \sum_{i:\D(i)> \frac{1}{n} } \left(\D(i)-\frac{1}{n}\right) = \sum_{i:\D(i)> \frac{1}{n}} \left(1-\frac{1}{n\D(i)}\right)\cdot \D(i) \notag\\
  &= \shortexpect_{i\sim \D}\left[ \left(1 - \frac{1}{n\D(i)}\right)\indic{\D(i)> \frac{1}{n}} \right] \label{eq:trick:dtv:as:expectation}
  \end{align}
  where $\indicSet{E}$ for the indicator function of set (or event) $E$; thus, $\mathcal{T}$ only has to do get an empirical estimate of \newer{this expected value}, which can be done by taking $m=\bigO{{1}/{(\eps_2-\eps_1)^2}}$ samples $s_i$ from $\D$, querying $\D(s_i)$ and computing $X_i = \left(1 - \frac{1}{nD(s_i)}\right)\indic{D(s_i)> \frac{1}{n}}$ (cf. Alg.~\ref{algo:compute:l1:distance:samp+eval}).
  \begin{algorithm}[h!]
    \begin{algorithmic}
      \Require $\SAMP_{\D}$ and $\EVAL_{\D}$ oracle access, parameters $0\leq \eps_1 < \eps_2$
      \State Set $m\eqdef\bigTheta{\frac{1}{\gamma^2}}$, where $\gamma\eqdef\frac{\eps_2-\eps_1}{2}$.
      \State Draw $s_1,\dots,s_m$ from $\D$
      \For{$i=1 \textbf{ to } m$}
        \State With \EVAL, get $X_i \eqdef \left(1 - \frac{1}{nD(s_i)}\right)\indic{D(s_i)> \frac{1}{n}}$
      \EndFor
      \State Compute $\hat{d}\eqdef \frac{1}{m} \sum_{i=1}^m X_i$.
      \If{ $\hat{d}\leq \frac{\eps_1+\eps_2}{2}$} \State \Return\accept  \Else \State \Return\reject \EndIf
    \end{algorithmic}
    \caption{\label{algo:compute:l1:distance:samp+eval}Tester $\mathcal{T}$: \textsc{Estimate-$L_1$} }
  \end{algorithm}

  \paragraph*{Analysis} Define the random variable $X_i$ as above; from Eq.\eqref{eq:trick:dtv:as:expectation}, we can write its expectation as
    \begin{equation}
    \expect{ X_i } = \sum_{k=1}^n D(k)\abs{1-\frac{1}{n D(k)}}\indic{D(k)> \frac{1}{n}} = \totalvardist{\D}{\uniform}.
    \end{equation}
    Since the $X_i$'s are independent and take value in $[0,1]$, an additive Chernoff bound ensures that
    \begin{equation}\label{eq:chernoff:dhat}
      \probaOf{ \abs{\hat{d} - \totalvardist{\D}{\uniform}} \geq \gamma  } \leq 2e^{-2\gamma^2 m}
    \end{equation}
    which is at most $1/3$ by our choice of $m$. Conditioning from now on on the event {$\abs{\hat{d} - \totalvardist{\D}{\uniform} } < \gamma$}:
    \begin{itemize}
      \item if $\totalvardist{\D}{\uniform}\leq \eps_1$, then $\hat{d}\leq \eps_1+\gamma$, and $\mathcal{T}$ outputs \accept;
      \item if $\totalvardist{\D}{\uniform} > \eps_2$, then $\hat{d} > \eps_2-\gamma$, and $\mathcal{T}$ outputs \reject.
    \end{itemize}
    Furthermore, the algorithm makes $m$ \SAMP queries, and $m$ \EVAL queries.
\end{proof}

\begin{remark}\label{remark:tolerant:tester:uniform:only:eval}
  Note that we can also do it with \EVAL queries only (same query complexity), by internally drawing uniform samples: indeed,
  \[
  2\totalvardist{\D}{\uniform} = \sum_{i\in[n]} \abs{\D(i)-\frac{1}{n}} = \sum_{i\in[n]} \abs{n\D(i)-1}\cdot \frac{1}{n} = 2\shortexpect_{x\sim \uniform}\left[ \abs{n\D(x)-1}\indic{\frac{1}{n} > \D(x)} \right]
  \]
  This also applies to the first corollary below, as long as the known distribution is efficiently samplable by the algorithm.
\end{remark}
Indeed, the proof above can be easily extended to other distributions than uniform, and even to the case of two unknown distributions:
\begin{corollary}\label{coro:tolerant:tester:known:d}
In the \pdfsamp access model, there exists a tolerant tester for \newer{identity} to a known distribution with query complexity $\bigO{\frac{1}{(\eps_2-\eps_1)^2}}$.
\end{corollary}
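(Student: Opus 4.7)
The plan is to follow exactly the template of the proof of Theorem~\ref{theorem:tolerant:tester:uniform}, replacing the uniform distribution $\uniform$ (and hence the constant reference value $1/n$) with the known distribution $\D^\ast$ throughout. The key observation is that the identity used for uniformity generalizes: for any two distributions $\D$ and $\D^\ast$,
\begin{equation*}
\totalvardist{\D}{\D^\ast} = \sum_{i:\D(i) > \D^\ast(i)} \left(\D(i) - \D^\ast(i)\right) = \sum_{i:\D(i) > \D^\ast(i)} \left(1 - \frac{\D^\ast(i)}{\D(i)}\right)\cdot \D(i) = \shortexpect_{i\sim\D}\left[ \left(1 - \frac{\D^\ast(i)}{\D(i)}\right)\indic{\D(i) > \D^\ast(i)} \right].
\end{equation*}
Since $\D^\ast$ is fully known to the tester, computing $\D^\ast(i)$ on any input $i$ is free, and each term inside the expectation is a bounded quantity in $[0,1]$, exactly as in the uniform case.

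The algorithm is then a direct adaptation of Alg.~\ref{algo:compute:l1:distance:samp+eval}: set $\gamma \eqdef (\eps_2-\eps_1)/2$, draw $m = \bigTheta{1/\gamma^2}$ samples $s_1,\dots,s_m$ from $\D$ via $\SAMP_\D$, query $\EVAL_\D(s_i)$ to obtain $\D(s_i)$, and compute
$X_i \eqdef \left(1 - \frac{\D^\ast(s_i)}{\D(s_i)}\right)\indic{\D(s_i) > \D^\ast(s_i)}$,
whose values lie in $[0,1]$ and whose expectation is precisely $\totalvardist{\D}{\D^\ast}$ by the identity above. Return \accept if the empirical mean $\hat{d} \eqdef \frac{1}{m}\sum_{i=1}^m X_i$ is at most $(\eps_1+\eps_2)/2$, and \reject otherwise.

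The analysis is identical to the uniform case: since the $X_i$ are i.i.d.\ in $[0,1]$ with mean $\totalvardist{\D}{\D^\ast}$, an additive Chernoff bound yields $\probaOf{\abs{\hat d - \totalvardist{\D}{\D^\ast}} \geq \gamma} \leq 2e^{-2\gamma^2 m} \leq 1/3$ for the chosen $m$. Conditioning on this good event, if $\totalvardist{\D}{\D^\ast} \leq \eps_1$ then $\hat d \leq \eps_1 + \gamma = (\eps_1+\eps_2)/2$ and the tester accepts; if $\totalvardist{\D}{\D^\ast} \geq \eps_2$ then $\hat d > \eps_2 - \gamma = (\eps_1+\eps_2)/2$ and the tester rejects. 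The total query cost is $m$ $\SAMP_\D$ queries and $m$ $\EVAL_\D$ queries, i.e., $\bigO{1/(\eps_2-\eps_1)^2}$.

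There is no real obstacle here: the only thing to verify is that dividing by $\D(s_i)$ is safe, which holds because any $i$ returned by $\SAMP_\D$ must satisfy $\D(i) > 0$ (exactly as noted in the footnote of Theorem~\ref{theorem:tolerant:tester:uniform}), so the ratio $\D^\ast(s_i)/\D(s_i)$ is always well-defined along the execution. As in Remark~\ref{remark:tolerant:tester:uniform:only:eval}, one could alternatively sample from $\D^\ast$ internally (possible since $\D^\ast$ is known and assumed to be efficiently samplable) and replace the expectation by $\shortexpect_{x\sim \D^\ast}\left[\abs{\D(x)/\D^\ast(x) - 1}\indic{\D^\ast(x) > \D(x)}\right]$, achieving the same bound using only $\EVAL_\D$ queries.
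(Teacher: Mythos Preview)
Your proof is correct and matches exactly the extension the paper intends: the paper merely states that the proof of Theorem~\ref{theorem:tolerant:tester:uniform} ``can be easily extended to other distributions than uniform,'' and your argument carries this out by replacing $1/n$ with $\D^\ast(i)$ in the estimator and rerunning the same Chernoff analysis. There is nothing to add.
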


\begin{corollary}\label{coro:tolerant:tester:unknown:d1:d2}
In the \pdfsamp access model, there exists a tolerant tester for \newer{closeness} between two unknown distributions with query complexity $\bigO{\frac{1}{(\eps_2-\eps_1)^2}}$. As noted in the next subsection, this is optimal (up to constants).
\end{corollary}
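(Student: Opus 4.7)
The plan is to adapt the strategy of Theorem~\ref{theorem:tolerant:tester:uniform} almost verbatim, replacing the role of the fixed weight $1/n$ of the uniform distribution by the (now queryable) weight $\D_2(i)$. Concretely, the tester will estimate $\totalvardist{\D_1}{\D_2}$ up to an additive error $\gamma \eqdef (\eps_2-\eps_1)/2$ and accept iff the resulting estimate is at most $(\eps_1+\eps_2)/2$.

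The key identity, mirroring Eq.~\eqref{eq:trick:dtv:as:expectation}, is
\[
\totalvardist{\D_1}{\D_2}
= \sum_{i:\D_1(i)>\D_2(i)} (\D_1(i)-\D_2(i))
= \shortexpect_{i\sim \D_1}\left[\left(1-\frac{\D_2(i)}{\D_1(i)}\right)\indic{\D_1(i)>\D_2(i)}\right],
\]
where the division is legitimate because any sample $i\sim \D_1$ has $\D_1(i)>0$. The integrand lies in $[0,1]$, so the expectation is estimable by an empirical mean with bounded-range Chernoff concentration.

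The algorithm therefore draws $m=\bigTheta{1/\gamma^2}$ samples $s_1,\dots,s_m$ from $\D_1$ (using $\SAMP_{\D_1}$), queries $\EVAL_{\D_1}(s_j)$ and $\EVAL_{\D_2}(s_j)$, and sets
\[
X_j \eqdef \left(1-\frac{\D_2(s_j)}{\D_1(s_j)}\right)\indic{\D_1(s_j)>\D_2(s_j)},\qquad \hat{d}\eqdef \frac{1}{m}\sum_{j=1}^m X_j.
\]
Since $\expect{X_j}=\totalvardist{\D_1}{\D_2}$ and $X_j\in[0,1]$, the additive Chernoff bound gives $\probaOf{|\hat d - \totalvardist{\D_1}{\D_2}|\geq \gamma}\leq 2e^{-2\gamma^2 m}<1/3$ for an appropriate constant in $m$. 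Conditioning on this event and thresholding $\hat d$ at $(\eps_1+\eps_2)/2$ separates the two cases exactly as in the proof of Theorem~\ref{theorem:tolerant:tester:uniform}. The total query complexity is $m$ $\SAMP$ queries to $\D_1$ and $m$ $\EVAL$ queries to each of $\D_1,\D_2$, i.e., $\bigO{1/(\eps_2-\eps_1)^2}$.

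There is essentially no obstacle beyond ensuring that the quantity being estimated is a genuine expectation over a distribution one can sample from and is supported on a bounded range; both are immediate here. (A symmetric variant that samples from the mixture $(\D_1+\D_2)/2$ and uses the bounded integrand $|\D_1(i)-\D_2(i)|/(\D_1(i)+\D_2(i))\in[0,1]$ would work equally well, at the cost of one extra fair coin flip per sample, and is perhaps more natural but gives the same bound.)
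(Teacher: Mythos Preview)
Your argument is correct and is exactly the straightforward extension the paper intends: replace $1/n$ by $\D_2(i)$ in Eq.~\eqref{eq:trick:dtv:as:expectation}, sample from $\D_1$, query both \EVAL oracles, and apply the same Chernoff-and-threshold analysis as in Theorem~\ref{theorem:tolerant:tester:uniform}. The paper does not spell out a separate proof for this corollary beyond noting that the uniformity argument ``can be easily extended,'' which is precisely what you did.
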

\ifnum\fulldetails=1
  \noindent Interestingly, this tester  can be made \newer{robust to multiplicative noise}, i.e. can be shown to work even when the answers to the \EVAL queries are only accurate up to a factor $(1+\gamma)$ for $\gamma>0$: it suffices to set $\gamma=\eps/2$, getting on each point $\hat{D}(i)\in[(1+\gamma)^{-1},1+\gamma]\D(i)$, and work with $X_i = \left(1 - \D^{\ast}(s_i)/\hat{D}(s_i)\right)\indic{\hat{D}(s_i)> \D^{\ast}(s_i)}$ and estimate the expectation up to $\pm\gamma$ (or, for \newer{closeness} between two unknown distributions, setting $\gamma=\eps/4$).
\else
  \noindent Interestingly, this tester  can be made \newer{robust to multiplicative noise}, i.e. can be shown to work even when the answers to the \EVAL queries are only accurate up to a factor $(1+\gamma)$ for $\gamma>0$.
\fi
  

\subsubsection{Lower bound}
In this subsection, we show that the upper bounds of Lemma~\ref{theorem:tolerant:tester:uniform}, Corollaries~\ref{coro:tolerant:tester:known:d} and \ref{coro:tolerant:tester:unknown:d1:d2} are tight.

\begin{theorem}\label{theorem:tolerant:tester:uniform:lb} In the \pdfsamp access model, performing $(\eps_1, \eps_2)$-testing for uniformity requires sample complexity $\bigOmega{\frac{1}{(\eps_2-\eps_1)^2}}$ (the bound holds even when only asking $\eps_1$ to be $\bigOmega{1}$).
\end{theorem}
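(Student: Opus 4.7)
The plan is to reduce from the biased-coin distinguishing problem of Fact~\ref{fact:fair:biased:coin}. I will construct two random families of distributions $\mathcal{D}_1,\mathcal{D}_2$ on $[n]$ such that \textsf{(i)} a draw from $\mathcal{D}_k$ is $\eps_k$-far from $\uniform$ (up to $\littleO{\gamma}$, where $\gamma \eqdef \eps_2-\eps_1$), and \textsf{(ii)} no adaptive tester making $q=\littleO{1/\gamma^2}$ \pdfsamp queries can distinguish the two cases with constant advantage.

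For the construction, fix a constant $c\in(0,1)$ (say $c=1/2$) and set $\lambda_k \eqdef 2\eps_k/c$ for $k\in\{1,2\}$, so that $\lambda_2-\lambda_1=2\gamma/c=\Theta(\gamma)$; since $\eps_1=\bigOmega{1}$, both $\lambda_k$ lie in $[\eta,1-\eta]$ for some constant $\eta>0$. Assume $n$ is even and $n\gg 1/\gamma^2$, and partition $[n]$ into consecutive pairs $\{2j-1,2j\}$. For each pair independently, flip a coin with bias $\lambda_k$: on heads, set $\D(2j-1)=(1+c)/n$ and $\D(2j)=(1-c)/n$; on tails, set $\D(2j-1)=\D(2j)=1/n$. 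Every realization sums to $1$ exactly, and $\totalvardist{\D}{\uniform}=(c/n)\cdot H$, where $H$ is the number of ``heads'' pairs; by an additive Chernoff bound, $H$ concentrates around $\lambda_k n/2$ closely enough that $\totalvardist{\D}{\uniform}=\eps_k\pm\littleO{\gamma}$ with probability $1-\littleO{1}$, as soon as $n$ is a sufficiently large multiple of $1/\gamma^2$.

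To argue indistinguishability, I would reveal the pair coin flips \emph{lazily}: a pair's coin is only flipped the first time one of its two indices is touched by either type of query. I may also assume without loss of generality that each $\SAMP$ is immediately followed by an $\EVAL$ on the returned index, which at worst doubles $q$. Since every pair has total $\D$-mass $2/n$ regardless of its coin flip, a $\SAMP$ query selects a uniformly random pair; hence each \emph{fresh} query (one touching a previously unopened pair) reveals a value drawn from a fixed three-point distribution on $\{(1-c)/n,\, 1/n,\, (1+c)/n\}$, whose weights depend only on $\lambda_k$ (the $\SAMP$-then-$\EVAL$ case weights the two nonuniform outcomes by the size-bias factor $(1\pm c)/2$, but still depends only on $\lambda_k$). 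Queries returning to an already-opened pair convey no new information. Crucially, the two three-point distributions for $k=1$ versus $k=2$ differ in total variation by $\Theta(|\lambda_1-\lambda_2|)=\Theta(\gamma)$, with all probabilities bounded away from $0$ and $1$.

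The concluding step is then standard: by a direct extension of Fact~\ref{fact:fair:biased:coin} to three-valued random variables (alternatively, a per-sample $\chi^2$ or KL computation combined with Pinsker's inequality), $q$ independent samples from these two distributions produce transcripts whose total variation distance stays strictly below $1/3$ as long as $q=\littleO{1/\gamma^2}$, contradicting the existence of a successful $(\eps_1,\eps_2)$-tester. The main subtlety I expect is showing that adaptivity buys nothing: this is exactly what the lazy-revelation coupling achieves, since the \emph{a priori} exchangeability of the pairs prevents the tester from concentrating its queries on an informative subset, reducing the analysis to $q$ i.i.d.\ samples from two fixed three-point distributions.
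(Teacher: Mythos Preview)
Your approach is essentially the paper's: both constructions pair up domain elements, decide each pair's profile (uniform vs.\ $(1\pm c)/n$) via an i.i.d.\ biased coin, and use lazy revelation so that each \pdfsamp query consumes at most one fresh coin flip, reducing to Fact~\ref{fact:fair:biased:coin}. The paper uses $K=1/\eps^2$ pairs of \emph{buckets} (rather than $n/2$ singleton pairs) and an extra random bit $b'_k$ deciding which side of each pair receives the higher mass, but these differences are cosmetic; the core reduction is identical.

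One imprecision worth tightening: it is not quite true that every fresh query yields a sample from a single ``fixed three-point distribution.'' A pure \EVAL on an odd (resp.\ even) index can only return a value in $\{(1+c)/n,\,1/n\}$ (resp.\ $\{(1-c)/n,\,1/n\}$), and a \SAMP-then-\EVAL additionally reveals the index parity. So the per-query observation distribution depends on the (adaptive) query type. The clean way to close this --- and exactly what the paper's explicit simulator $\Algo$ does --- is to note that in \emph{every} case the answer is a deterministic function of the pair's $\bernoulli{\lambda_k}$ coin together with auxiliary randomness that does not depend on $k$. Hence the full transcript is a function of at most $q$ i.i.d.\ $\bernoulli{\lambda_k}$ draws plus $k$-independent noise, and Fact~\ref{fact:fair:biased:coin} applies directly with no need for a three-point extension.
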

 \begin{proof}
The overall idea lies on a reduction from distinguishing between two types of biased coins to tolerant testing for uniformity. In more detail, given access to samples from a fixed coin (promised to be of one of these two types), we define a probability distribution as follows: the domain $[n]$ is randomly partitioned into $K=1/\eps^2$ pairs of buckets, each bucket having same number of elements; the distribution is uniform within each bucket, and the two buckets of each pair are balanced to have total weight $2/K$. However, within each pair of buckets $(A,B)$, the probability mass is divided according to a coin toss (performed ``on-the-fly'' when a query is made by the tolerant tester), so that either \textsf{(a)} $D(A)=(1+\alpha)/K$ and $D(B)=(1-\alpha)/K$, or \textsf{(b)} $D(A)=D(B)=1/K$. Depending on whether the coin used for this choice is fair or $(\half+\eps)$)biased, the resulting distribution will (with high probability) have different distance from uniformity -- sufficiently for a tolerant tester to distinguish between the two cases.

  \paragraph*{Construction} We start by defining the instances of distributions we shall consider. Fix any $\eps\in(0,\frac{1}{100})$; without loss of generality, assume $n$ is even, and $n \gg 1/\eps$. Define $\alpha = 2/(1+\eps)$, $K= 1/\eps^2$, $p^+ = (1+\eps)/2$ and $p^- = (1+20\eps)/2$, and consider the family of distributions $\mathcal{D}^+$ (resp. $\mathcal{D}^-$) defined by the following construction:
  \begin{itemize}
    \item pick \newer{uniformly at random} a partition\footnote{For convenience, it will be easier to think of the $A_i$'s and $B_i$'s as consecutive intervals, the first ones covering $[\frac n 2]$ while the former cover $[n]\setminus[\frac n 2]$ (see Fig.~\ref{fig:construction:lb:tolerant:uniformity:D:pm}).} of $[n]$ in $2K$ sets of size $n/(2K)$ $A_1,\dots, A_K, B_1,\dots, B_K$;
    \item for all $k\in[K]$, draw independently at random $X_k\sim\bernoulli{p^+}$ (resp. $X_k\sim\bernoulli{p^-}$), and set for all $x\in A_k$, $y\in B_k$
    \[
    \D^+(x)=\begin{cases}
      \frac{1+\alpha}{n} & \text{ if } X_i = 1 \\
      \frac{1}{n} & \text{ o.w.}
    \end{cases}
    \qquad\text{and}\qquad
    \D^+(y)=\begin{cases}
      \frac{1-\alpha}{n} & \text{ if } X_i = 1 \\
      \frac{1}{n} & \text{ o.w.}
    \end{cases}
    \]
  \end{itemize}
  (the pairing between $A_k$ and $B_k$ ensures the final measure indeed sums to one).
  Regardless of the choice of the initial partition, but with fluctuations over the random coin tosses $X_1,\dots, X_k$, we have that the total variation distance between a distribution $\D^+\in\mathcal{D}^+$ (resp. $\D^-\in\mathcal{D}^-$) and uniform is on expectation what we aimed for:
  \begin{align*}
  \expect{ \totalvardist{\D^+}{\uniform} } &= \half\cdot 2\cdot\sum_{k=1}^K \frac{n}{2K}\cdot\frac{\alpha}{n}p^+ = \half\alpha p^+ = \half \\
  \expect{ \totalvardist{\D^{-\vphantom{+}}}{\uniform} } &= \half p^-\alpha = \frac{1+20\eps}{1+\eps}\cdot\frac{1}{2} > \frac{1}{2} + 7\eps
  \end{align*}
  and with an additive Chernoff bound 
   on the sum of $1/\eps^2$ i.i.d. choices for the $X_k$'s, we have that for $(\D^+, \D^-)$: for any choice of the initial partition $\Pi=(A_k,B_k)_{k\in[K]}$, with probability at least $99/100$,
  \begin{align*}
  \totalvardist{\D_\Pi^+}{\uniform} &< \frac{1}{2} + 3\eps \\
  \totalvardist{\D_\Pi^-}{\uniform} &> \frac{1}{2} + 4\eps
  \end{align*}
  where by $\D_\Pi^\pm$ we denote the distribution defined as above, but fixing the partition for the initial step to be $\Pi$. We will further implicitly condition on this event happening; any tester for uniformity called with $(\eps^\prime, \eps^\prime+c\eps)$ must therefore distinguish between $\D^+$ and $\D^-$. Suppose we have such a tester $\mathcal{T}$, with (\newer{without loss of generality}) exact sample complexity $q=q(\eps)=\littleO{\frac{1}{\eps^2}}$.
  
  
\begin{figure}[!ht]\centering
  \begin{tikzpicture}[x=6pt, y=2pt]
  \pgfmathsetmacro{\xmax}{ 70 }
  \pgfmathsetmacro{\ymax}{40}
  \pgfmathsetmacro{\alphavalue}{0.20}
    \draw [<->] (0,\ymax) node[above] {$\D^+(i)$} -- (0,0) -- (\xmax,0)  node[right] {$i$};

  \pgfmathsetmacro{\rbuckets}{32} 
  \pgfmathparse{{int(floor(\rbuckets/2)-1))}}\let\rbucketshalf\pgfmathresult
  \pgfmathsetmacro{\buckwidth}{(\xmax/\rbuckets)}
  
      \node [left] at (0,{\ymax*(1/2+\alphavalue)}) {$\frac{1+\alpha}{n}$};
      \node [left] at (0,{\ymax*(1/2-\alphavalue)}) {$\frac{1-\alpha}{n}$};
      \node [below] at (\xmax/2,-3) {$\frac{n}{2}$};
      \node [below] at (\xmax,-3) {$n$};
      \node [below] at (0,-3) {$1$};
      \draw[thin, black, dotted]  (0,{\ymax*(1/2)}) -- (\xmax,{\ymax*(1/2)} );

    \foreach \i in {1,...,3}{
      \node[below] at ({(\i-0.5)*\buckwidth},0) {\scriptsize $A_{\i}$};
      \node[below] at ({\xmax-(\i-0.5)*\buckwidth},0) {\scriptsize $B_{\i}$};
    };
    \foreach \i in {\rbucketshalf+1}{
      \node[below] at ({(\i-0.75)*\buckwidth},0) {\scriptsize $A_{K}$};
      \node[below] at ({\xmax-(\i-0.75)*\buckwidth},0) {\scriptsize $B_{K}$};
    };
    
    \draw[thin, dotted] \foreach \i in {1,...,\rbuckets}{
      ({\i*\buckwidth},0) -- ({\i*\buckwidth},\ymax)
    };
    \draw[thin] ({\xmax/2},0) -- ({\xmax/2},\ymax);
    
        \foreach \i in {0,...,\rbucketshalf}{
          \pgfmathsetmacro{\cointoss}{round(random)};
          \pgfmathsetmacro{\fstHigh}{{\ymax*(1/2+\cointoss*\alphavalue)}};
          \pgfmathsetmacro{\sndHigh}{{\ymax*(1/2-\cointoss*\alphavalue)}};
          
          \pgfmathtruncatemacro{\iIsEven}{int(\cointoss))}
          \ifnum\iIsEven=0
            \xdef\linestyle{red};
          \else
            \xdef\linestyle{blue};
          \fi
          
          \draw[thin,\linestyle]  ({\i*\buckwidth},{\fstHigh}) -- ({(\i+1)*\buckwidth},{\fstHigh});
          \draw[thin,\linestyle]  ({\xmax-\i*\buckwidth},{\sndHigh}) -- ({\xmax-(\i+1)*\buckwidth},{\sndHigh});
        };
  \end{tikzpicture}\caption{\label{fig:construction:lb:tolerant:uniformity:D:pm}The \textsf{yes}-instance $\D^+$ (for a fixed $\Pi$, taken to be consecutive intervals).}
\end{figure}
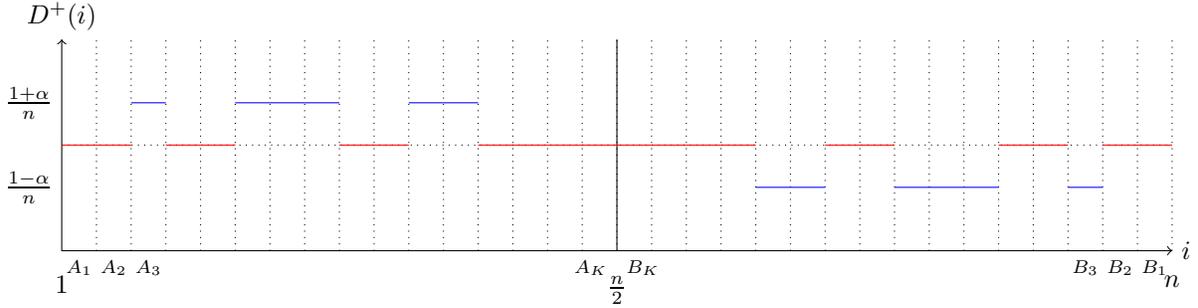
  
  \paragraph*{Reduction} We will reduce the problem of distinguishing between $\textsf{(a)}$ a $\frac{p^+\alpha}{2}$ and $\textsf{(b)}$ a $\frac{p^-\alpha}{2}$ biased coin to telling $\D^+$ and $\D^-$ apart.
  
Given $\SAMP_\text{coin}$ access to i.i.d. coin tosses coming either from one of those two situations, define a distinguisher \Algo as follows:
  \begin{itemize}
    \item choose \newer{uniformly at random} a partition $\Pi=(A^0_k,A^1_k)_{k\in[K]}$ of $[n]$; for convenience, for any $i\in[n]$, we shall write $\pi(i)$ for the index $k\in[K]$ such that $i\in A^0_k\cup A^1_k$, and $\varsigma(i)\in\{0,1\}$ for the part in which it belongs -- so that $i\in A^{\varsigma(i)}_{\pi(i)}$ for all $i$;
    \item run $\mathcal{T}$, maintaining a set $C$ of triples\footnote{Abusing the notation, we will sometimes write ``$k\in C$'' for ``there is a triple in $C$ with first component $k$''.} $(k, \D^0_k, \D^1_k)$ (initially empty), containing the information about the $(A^0_k,A^1_k)$ for which the probabilities have already be decided;
    \item $\EVAL$: whenever asked an evaluation query on some $i\in[n]$:
      \begin{itemize}
        \item if $\pi(i)\in C$, return $\D^{\varsigma(i)}_{\pi(i)}$;
        \item otherwise, let $k=\pi(i)$; ask a fresh sample $b_k$ from $\SAMP_\text{coin}$ and draw a uniform random bit $b_k^\prime$; set
    \[
    (\D^0_k, \D^1_k)=\begin{cases}
      (\frac{1}{n},\frac{1}{n}) & \text{ if } b_k = 0 \\
      (\frac{1+\alpha}{n},\frac{1-\alpha}{n}) & \text{ if } b_k = 1, b^\prime_k = 1 \tag{``Choosing the profile''}\\
      (\frac{1-\alpha}{n},\frac{1+\alpha}{n}) & \text{ if } b_k = 1, b^\prime_k = 0
    \end{cases}\]  
    then add $(k,\D^0_k, \D^1_k)$  to $C$; and return $\D^{\varsigma(i)}_{k}$.
      \end{itemize}
    \item $\SAMP$: whenever asked a sample: let $\gamma=\frac{n}{2K}\sum_{k\in C} d_k$ the current probability mass of the ``committed points''; observe that the distribution $\D_C$ induced by the $d_k$'s on $\setOfSuchThat{i\in[n]}{\pi(i)\in C}$ is fully known by \Algo;
    \begin{itemize}
      \item with probability $\gamma$, \Algo draws $i\sim \D_C$ and returns it;
      \item otherwise, \Algo draws $k\sim\uniformOn{[K]\setminus C}$. As before, it gets $b_k$ from $\SAMP_\text{coin}$ and a uniform random bit $b_k^\prime$; gets $(\D^0_k, \D^1_k)$ as in the \EVAL case, commits to it as above by $(k,\D^0_k, \D^1_k)$ to $C$. Finally, it draws a random sample $i$ from the piecewise constant distribution induced by  $(\D^0_k, \D^1_k)$ on $A^0_k\cup A^1_k$, where each $j\in A^0_k$ (resp. $A^1_k$) has equal probability mass $\D^0_k\cdot\frac{n}{2K}$ (resp. $\D^1_k\cdot\frac{n}{2K}$), and returns $i$.
    \end{itemize}
  \end{itemize}
  Observe that $\Algo$ makes at most $q$ queries to $\SAMP_\text{coin}$; provided we can argue that \Algo answers $\mathcal{T}$'s queries consistently to what a corresponding $\D^\pm$ (depending on whether we are in case $\textsf{(a)}$ or $\textsf{(b)}$) would look like, we can conclude.
  
  \noindent This is the case, as \textsf{(i)} $\Algo$ is always consistent with what its previous answers induce on the distribution (because of the maintaining of the set $C$); \textsf{(ii)} any \EVAL query on a new point exactly simulates the ``on-the-fly'' construction of a $\D^\pm$; and any \SAMP query is either consistent with the part of $\D^\pm$ already built, or in case of a new point gets a sample exactly distributed according to the $\D^\pm$ built ``on-the-fly''; this is because in any $\D^\pm$, every $A_k\cup B_k$ has same probability mass $1/(2K)$; therefore, in order to get one sample, tossing $K$ i.i.d. coins to decide the ``profiles'' of \emph{every} $A_k\cup B_k$ before sampling from the overall support $[n]$ is equivalent to first choosing uniformly at random a particular $S=A_k\cup B_k$, tossing one coin to decide \emph{only its particular profile}, and then drawing a point accordingly from $S$.
  
  In other terms, \Algo will distinguish, with only $\littleO{1/\eps^2}$ i.i.d. samples, between cases $\textsf{(a)}$ ($\frac{1}{2}$-biased coin) and $\textsf{(b)}$ ($\frac{1}{2}+\bigOmega{\eps}$)-biased coin with probability at least $6/10$ -- task which, for $\eps$ sufficiently small, is known to require $\bigOmega{1/\eps^2}$ samples (cf. Fact~\ref{fact:fair:biased:coin}), thus leading to a contradiction.
\end{proof}


\section{Entropy and support size}\label{sec:supportsize:entropy}
\subsection{Additive and multiplicative estimations of entropy}\label{sec:supportsize:entropy:support}

In this section, we describe simple algorithms to perform additive and multiplicative estimation (which in turns straightforwardly implies tolerant testing) of the \emph{entropy} $H(\D)$ of the unknown distribution $\D$, defined as 
\[
  H(\D)\eqdef -\sum_{i\in[n]} \D(i)\log\D(i) \in[0,\log n]
\]
We remark that Batu et al. (\cite{BDK+:05}, Theorem 14) gives a similar algorithm, based on essentially the same approach but relying on a Chebyshev bound, yielding a $(1+\gamma)$-multiplicative approximation algorithm for entropy with sample complexity $\bigO{(1+\gamma)^2\log^2 n/\gamma^2h^2}$, given a lower bound $h>0$ on $H(\D)$. 

Guha et al. (\cite{GMV:05}, Theorem 5.2) then refined their result, using as above a threshold for the estimation along with a multiplicative Chernoff bound to get the sample complexity down to $\bigO{\log n/\gamma^2 h}$~--~thus matching the $\bigOmega{{\log n}/{\gamma(2+\gamma) h}}$ lower bound of \cite{BDK+:05} (Theorem 18); we recall their results for multiplicative estimation of the entropy below\footnote{In particular, note that translating their lower bound for additive estimation implies that the dependence on $n$ of our algorithm is tight.}.

\begin{theorem}[Upper bound {[\cite{GMV:05}, Theorem 5.2]}]\label{theorem:entropy:estimation:multiplicative}
Fix $\gamma > 0$. In the \pdfsamp access model, there exists an algorithm that, given a parameter $h >0$ and the promise that $H(\D) \geq h$, estimates the entropy within a multiplicative $(1+\gamma)$ factor, with sample complexity $\bigTheta{\frac{\log n }{\gamma^2 h}}$.
\end{theorem}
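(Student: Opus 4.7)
The plan is to exploit the identity $H(\D) = \shortexpect_{i\sim\D}[-\log \D(i)]$, which is precisely a quantity of the form $\shortexpect_{i\sim\D}[\Phi(i,\D(i))]$ — the archetypal target of the \pdfsamp oracle already highlighted in Section~\ref{sec:uniformity:equivalence:identity:testing:tolerant}. The natural estimator is to draw $m$ samples $s_1,\dots,s_m$ via $\SAMP_\D$, query each $\EVAL_\D(s_j)$, and return
$\hat H \eqdef \frac{1}{m}\sum_{j=1}^m (-\log \D(s_j))$; by linearity $\shortexpect[\hat H] = H(\D)$, so the entire argument reduces to concentration. The hard part is that $-\log \D(s_j)$ is unbounded whenever $\D$ places arbitrarily small mass on points of its support, which rules out a direct application of the multiplicative Chernoff bound.

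To circumvent this I would introduce a threshold $\tau \eqdef 1/n^2$ and work with the truncated variable $Y_j \eqdef (-\log \D(s_j))\cdot\indicSet{\D(s_j)\geq \tau}$, which lies in $[0,R]$ for $R \eqdef 2\log n$. The key technical step is to control the resulting bias $\beta \eqdef H(\D) - \shortexpect[Y_j] = \sum_{i:\D(i)<\tau}\D(i)\log(1/\D(i))$: writing each offending mass as $\D(i) = q_i/n^2$ with $q_i\in(0,1)$ yields
\[
\beta \;\leq\; n^{-2}\sum_{i:\D(i)<\tau} q_i\bigl(2\log n + \log(1/q_i)\bigr) \;=\; \bigO{\log n / n},
\]
since there are at most $n$ light points and $q\log(1/q)\leq 1/e$. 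In the only interesting regime $h = \omega(\log n/n)$ (otherwise the task is trivial), this bias is $\littleO{\gamma h}$ and can be absorbed into the final error budget.

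With $Y_j \in [0,R]$ and $\mu \eqdef \shortexpect[Y_j] \geq (1-\littleO{1})h$, a standard multiplicative Chernoff gives
\[
\probaOf{\,\abs{\hat Y - \mu} > (\gamma/2)\mu\,} \;\leq\; 2\exp\!\Bigl(-\bigOmega{m\gamma^2 h/\log n}\Bigr),
\]
where $\hat Y \eqdef \frac{1}{m}\sum_j Y_j$, so that $m = \bigTheta{\log n/(\gamma^2 h)}$ samples suffice to drive the failure probability below $1/3$. Combining this concentration bound with the $\bigO{\log n / n}$ truncation bias yields $\abs{\hat Y - H(\D)} \leq \gamma H(\D)$, as required. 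For the matching $\bigOmega{\log n/(\gamma^2 h)}$ lower bound, I would invoke the construction of~\cite{BDK+:05} (Theorem 18), which exhibits two distributions with multiplicatively separated entropies whose \pdfsamp views are statistically indistinguishable on fewer than $\littleO{\log n/(\gamma^2 h)}$ queries; I would not replicate it here, since the main conceptual obstacle in the upper bound — taming the heavy-tailed $-\log \D(s)$ terms via truncation — has already been handled, and what remains on both sides is standard.
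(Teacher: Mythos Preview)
Your proposal is correct and aligns with the approach the paper attributes to \cite{GMV:05}. Note that the paper does not give its own proof of this theorem: it is stated as a cited result, with only a one-line description of the method (``using as above a threshold for the estimation along with a multiplicative Chernoff bound''). Your argument instantiates precisely this recipe --- truncating $-\log\D(s_j)$ at a threshold to obtain $[0,\bigO{\log n}]$-bounded summands, then invoking a multiplicative Chernoff bound with mean $\mu \geq (1-\littleO{1})h$ --- and your specific choice $\tau = 1/n^2$ together with the bias analysis $\beta = \bigO{\log n/n}$ is a clean way to make it work. The only mild looseness is the ``otherwise trivial'' clause for $h = \bigO{\log n/n}$: you are right that in this regime the budget $\bigTheta{\log n/(\gamma^2 h)}$ exceeds $n$, so one can simply query $\EVAL_\D$ on all of $[n]$ and compute $H(\D)$ exactly, but it would be worth stating this explicitly. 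For the matching lower bound, the paper likewise just cites \cite{BDK+:05} (Theorem~18), whose bound is $\bigOmega{\log n/(\gamma(2+\gamma)h)}$ rather than $\bigOmega{\log n/(\gamma^2 h)}$; the two coincide up to constants only for $\gamma = \bigO{1}$, so your appeal to it is fine in the regime of interest.
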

\begin{theorem}[Lower bound {[\cite{BDK+:05}, Theorem 18]}]\label{theorem:entropy:estimation:multiplicative:lb}
Fix $\gamma > 0$. In the \pdfsamp access model, any algorithm that, given a parameter $h >0$ and the promise that $H(\D) =\bigOmega{h}$, estimates the entropy within a multiplicative $(1+\gamma)$ factor must have sample complexity $\bigOmega{\frac{\log n}{\gamma(2+\gamma) h}}$.
\end{theorem}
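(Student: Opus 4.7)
The plan is to prove the lower bound via Yao's minimax principle by exhibiting two families of distributions $\mathcal{F}_1, \mathcal{F}_2$ over $[n]$ with the property that every $D \in \mathcal{F}_1$ has entropy at most $h$ while every $D \in \mathcal{F}_2$ has entropy at least $(1+\gamma)^2 h$, and then showing that no deterministic algorithm making $q = \littleO{\log n /(\gamma(2+\gamma)h)}$ \pdfsamp queries can distinguish a uniformly random element of $\mathcal{F}_1$ from one of $\mathcal{F}_2$ with probability $\geq 2/3$. Any $(1+\gamma)$-multiplicative entropy estimator must distinguish the two, since $(1+\gamma)^2 > (1+\gamma)^2/(1+\gamma) = 1+\gamma$; this reduction then yields the claimed bound.

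For the construction, the natural first attempt is to let $\mathcal{F}_j$ consist of distributions uniform on a uniformly random subset $S_j \subseteq [n]$ of size $N_j$, with $N_1 = 2^h$ and $N_2 = 2^{(1+\gamma)^2 h}$, so that $H = \log N_j$ gives the target separation. Randomizing the support over a uniformly chosen permutation of $[n]$ is essential for defeating $\EVAL$ queries on adversarial points. However, this alone is insufficient: a single $\SAMP$ query returns an element $i$ whose $\EVAL$ value is exactly $1/N_j$, immediately revealing the family. The construction must therefore be refined so that the probability mass function takes a common set of values (say $\{v,v'\}$) under both families, with only the \emph{proportion} of each value differing — for instance a two-scale distribution with a ``noise floor'' on the complement of the random support. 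Parameters are then tuned so that the entropies remain separated by $(1+\gamma)^2$ while the marginal distribution of $\EVAL$ at any point (including a \SAMP'd point) is (nearly) identical under $\mathcal{F}_1$ and $\mathcal{F}_2$.

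The indistinguishability step is a hybrid/coupling argument. Over the randomness of the underlying permutation hiding the support, the joint distribution of the $q$-query transcript factors, query by query, as either a $\SAMP$ returning a uniform support element or an $\EVAL$ on an adaptively chosen point. For each query one bounds the probability that it lands on a ``revealing'' point (i.e., one witnessing the asymmetry between the two scales) by $\bigO{\gamma(2+\gamma)h / \log n}$; a union bound over $q$ queries then gives a total variation bound of $o(1)$ when $q = \littleO{\log n / (\gamma(2+\gamma) h)}$, contradicting the assumed $2/3$ distinguishing advantage.

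The main obstacle is the construction itself: the $\SAMP$ and $\EVAL$ oracles combine in subtle ways, and even asymmetries of order $1/\log n$ in the per-query view translate, over $q$ queries, into a noticeable distinguishing advantage. One must therefore choose the two pmf scales and their proportions so that the per-query Hellinger (or Le Cam) distance between the two families is $\bigO{\gamma(2+\gamma) h / \log n}$ \emph{uniformly over the adversary's choice of query}. The $\gamma(2+\gamma)$ factor arises precisely from expanding $(1+\gamma)^2 - 1$, reflecting that this is the relative gap between the two entropies that the tester is forced to detect, while the $\log n$ normalization reflects the information carried per query by an $\EVAL$ answer on a random-permutation-hidden support.
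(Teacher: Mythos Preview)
The paper does not give its own proof of this statement: the theorem is quoted verbatim as Theorem~18 of \cite{BDK+:05} and is stated without any accompanying argument. There is therefore nothing in the present paper to compare your attempt against.

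Regarding your proposal on its own merits: the overall shape (Yao's principle over two randomly-permuted families whose entropies differ by a factor $(1+\gamma)^2$, with a per-query information bound of order $\gamma(2+\gamma)h/\log n$ and a union bound over $q$ queries) is indeed the natural architecture for such a lower bound, and you correctly identify why the naive ``uniform on a random subset'' construction fails against a single $\SAMP$+$\EVAL$ pair. However, what you have written is a plan rather than a proof. The entire difficulty lies in the step you describe only in words: exhibiting concrete two-scale distributions whose pmf \emph{values} coincide as sets (so that $\EVAL$ on a sampled point is uninformative) while their entropies sit at $h$ and $(1+\gamma)^2 h$ and the per-query statistical distance is $\bigO{\gamma(2+\gamma)h/\log n}$. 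You assert that ``parameters are then tuned'' to achieve this, but with an \emph{exact} $\EVAL$ oracle this is delicate --- for instance, any fixed ``heavy'' element whose mass differs between the two families is exposed by a single query, so the construction must also randomize over that mass (or avoid it entirely), and you have not shown how to reconcile this with the normalization constraint and the target entropy gap. Until those parameters are written down and the per-query bound is actually computed, the argument has a genuine gap at its central step.
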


 Observe that the additive bound we give (based on a different cutoff threshold), however, still performs better in many cases, e.g. $\Delta=\gamma h > 1$ and $h > 1$; and does not require any \textit{a priori} knowledge on a lower bound $h > 0$. Moreover, we believe that this constitutes a good illustration of the more general technique used, and a good example of what the \pdfsamp model allows: approximation of quantities of the form $\shortexpect_{i\sim \D}\left[ \Phi(i,\D(i)) \right]$, where $\Phi$ is any \emph{bounded} function of both an element of the domain and its probability mass under the distribution $\D$.

\paragraph{Additive estimate}
The overall idea is to observe that for a distribution $\D$, the entropy $H(\D)$ can be rewritten as
\begin{equation}
  H(\D) = \sum_{x\in[n]} \D(x)\log\frac{1}{\D(x)} = \shortexpect_{x\sim \D}\left[ \log\frac{1}{\D(x)} \right]
\end{equation}
The quantity $\log\frac{1}{\D(x)}$ cannot be easily upperbounded, which we need  for concentration results. However, recalling that the function {$x\mapsto x\log(1/x)$} is increasing for $x\in(0,\frac{1}{e})$ (and has limit 0 when $x\to 0^+$), one can refine the above identity as follows: for any \emph{cutoff threshold} $\tau \in(0,\frac{1}{e})$, write 
\begin{equation}
  H(\D) = \sum_{x:\D(x) \geq \tau} \D(x)\log\frac{1}{\D(x)} + \sum_{x:\D(x) < \tau} \D(x)\log\frac{1}{\D(x)} 
\end{equation}
so that
\begin{align*}
  H(\D) \geq \sum_{x:\D(x) \geq \tau} \D(x)\log\frac{1}{\D(x)} &\geq H(\D) - \sum_{x:\D(x) < \tau} \D(x)\log\frac{1}{\D(x)}  \\
  &\geq H(\D) - n\cdot\tau\log\frac{1}{\tau}
\end{align*}
Without loss of generality, assume $\frac{\Delta}{n} < \half$. Fix $\tau\eqdef\frac{\frac{\Delta}{n}}{10\log \frac{n}{\Delta}}$, so that $n\cdot\tau\log\frac{1}{\tau}\leq \frac{\Delta}{2}$; and set 
\[
  \varphi\colon y \mapsto \log\frac{1}{y}\indic{y \geq \tau}
\]
Then, the above discussion gives us
\begin{equation}
  H(\D) \geq \shortexpect_{x\sim \D}[ \varphi(\D(x)) ] \geq H(\D) - \frac{\Delta}{2}
\end{equation}
and getting an additive $\Delta/2$-approximation of $\shortexpect_{x\sim \D}[ \varphi(\D(x))]$ is enough for estimating $H(\D)$ within $\pm\Delta$; further, we now have
\begin{equation}
  0\leq \varphi(\D(x)) \leq \log\frac{1}{\tau} \sim \log\frac{n}{\Delta} \text{ a.s.}
\end{equation}
so using an additive Chernoff bound, taking $m=\bigTheta{\frac{\log^2\frac{n}{\Delta}}{\Delta^2}}$ samples $x_1,\dots,x_m$ from $\SAMP_D$ and computing the quantities $\varphi(\D(x_i))$ using $\EVAL_\D$ implies
\[
  \probaOf{ \abs{  \frac{1}{m}\sum_{i=1}^m\varphi(\D(x_i)) - \shortexpect_{x\sim \D}[ \varphi(\D(x)) } ]\geq \frac{\Delta}{2} } \leq 2e^{ -\frac{\Delta^2 m}{\log^2\frac{1}{\tau}} } \leq \frac{1}{3}
\]
\noindent This leads to the following theorem:
\begin{theorem}\label{theorem:entropy:estimation:additive}
In the \pdfsamp access model, there exists an algorithm estimating the entropy up to an additive $\Delta$, with sample complexity $\bigTheta{\frac{\log^2\frac{n}{\Delta}}{\Delta^2}}$.
\end{theorem}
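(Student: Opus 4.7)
My plan is to estimate the entropy by realizing it as an expectation amenable to the general \pdfsamp estimation method: write $H(\D) = \shortexpect_{x\sim\D}[\log(1/\D(x))]$ and approximate this expected value by an empirical average over samples $x_i \sim \SAMP_\D$ whose integrand values are computed via $\EVAL_\D$. The main obstacle is that the random variable $\log(1/\D(x))$ is unbounded as $\D(x) \to 0^+$, so a direct additive Chernoff bound would yield a vacuous sample complexity. I would bypass this by truncating: estimate instead $\shortexpect_{x\sim\D}[\varphi(\D(x))]$, where $\varphi(y) \eqdef \log(1/y)\indic{y \geq \tau}$ for a carefully chosen cutoff $\tau \in (0, 1/e)$.

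The first step is to bound the truncation bias. Using monotonicity of $y \mapsto y\log(1/y)$ on $(0,1/e)$, each of the at most $n$ points $x$ with $\D(x) < \tau$ contributes at most $\tau\log(1/\tau)$ to $H(\D)$, giving a total omitted mass of at most $n\tau\log(1/\tau)$. Setting $\tau \eqdef (\Delta/n)/(10\log(n/\Delta))$ makes this bias at most $\Delta/2$, so it suffices to estimate $\shortexpect_{x\sim\D}[\varphi(\D(x))]$ within an additive $\Delta/2$.

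The second step is the concentration argument. By construction $0 \leq \varphi(\D(x)) \leq \log(1/\tau) = O(\log(n/\Delta))$ almost surely. Drawing $m = \bigTheta{\log^2(n/\Delta)/\Delta^2}$ independent samples $x_1,\dots,x_m$ from $\SAMP_\D$ and computing $\varphi(\D(x_i))$ through a single $\EVAL_\D$ call per sample, a Hoeffding/Chernoff bound for bounded i.i.d. variables (rescaled by the range $\log(1/\tau)$) ensures that the empirical mean lies within $\Delta/2$ of $\shortexpect[\varphi(\D(x))]$ with probability at least $2/3$. Combined with the bias bound via the triangle inequality, this gives an estimator of $H(\D)$ with additive error $\Delta$ and the claimed sample complexity.

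The only subtlety I anticipate is the choice of $\tau$: it must be small enough that the truncated mass $n\tau\log(1/\tau)$ is controlled, yet large enough that $\log(1/\tau)$ does not inflate the Chernoff range past the target, and the stated $\tau$ achieves both with only logarithmic loss. The algorithm is otherwise routine, uses $m$ samples and $m$ evaluations, and does not require a lower bound on $H(\D)$, which contrasts with the multiplicative estimator of Theorem~\ref{theorem:entropy:estimation:multiplicative}.
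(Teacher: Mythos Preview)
Your proposal is correct and follows essentially the same approach as the paper: rewrite $H(\D)$ as $\shortexpect_{x\sim\D}[\log(1/\D(x))]$, truncate via $\varphi(y)=\log(1/y)\indic{y\geq\tau}$ with the same threshold $\tau=(\Delta/n)/(10\log(n/\Delta))$ to bound the bias by $n\tau\log(1/\tau)\leq\Delta/2$, and then apply an additive Chernoff bound over $m=\bigTheta{\log^2(n/\Delta)/\Delta^2}$ samples using the range $\log(1/\tau)$. The paper's argument and yours are effectively identical in structure and in the choice of parameters.
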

or, in terms of tolerant testing:
\begin{corollary}\label{corollary:tolerant:tester:entropy}
In the \pdfsamp access model, there exists an $(\Delta_1, \Delta_2)$-tolerant tester for entropy with sample complexity $\tildeTheta{\frac{\log^2 n}{(\Delta_1-\Delta_2)^2}}$.
\end{corollary}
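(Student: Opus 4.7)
The corollary is a direct consequence of Theorem~\ref{theorem:entropy:estimation:additive} via the standard equivalence between additive estimation of a numerical functional and tolerant testing of it (the reduction mentioned in the introduction with reference to \cite{PRR:2006}). The plan is to invoke the entropy estimator with a carefully chosen accuracy, then compare its output to the target entropy value against the midpoint of $\Delta_1$ and $\Delta_2$.

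Concretely, given a target entropy $h^*$ and parameters $0\leq \Delta_1 < \Delta_2$, I would set $\Delta \eqdef (\Delta_2 - \Delta_1)/2$ and run the algorithm of Theorem~\ref{theorem:entropy:estimation:additive} with this accuracy parameter, obtaining an estimate $\hat H$ that satisfies $|\hat H - H(\D)| \leq \Delta$ with probability at least $2/3$. The tester then outputs \accept if $|\hat H - h^*| \leq (\Delta_1+\Delta_2)/2$, and \reject otherwise. Correctness follows from the triangle inequality: conditioning on the success of the estimator, if $|H(\D)-h^*|\leq \Delta_1$ then $|\hat H - h^*| \leq \Delta_1 + \Delta = (\Delta_1+\Delta_2)/2$, so \Tester accepts; if $|H(\D)-h^*|\geq \Delta_2$, then $|\hat H - h^*| \geq \Delta_2 - \Delta = (\Delta_1+\Delta_2)/2$, so \Tester rejects.

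For the sample complexity, Theorem~\ref{theorem:entropy:estimation:additive} gives a bound of $\bigO{\log^2(n/\Delta)/\Delta^2}$. Writing $\log(n/\Delta) = \log n + \log(1/\Delta)$ and substituting $\Delta = (\Delta_2-\Delta_1)/2$, the $\log(1/\Delta)$ contributions are polylogarithmic in $\log n$ and $1/(\Delta_2-\Delta_1)$, hence absorbed by the $\tilde{\Theta}(\cdot)$ notation, yielding the announced $\tildeTheta{\log^2 n/(\Delta_2-\Delta_1)^2}$. There is no real obstacle here: the only thing worth being careful about is the accounting of the $\log(n/\Delta)$ factor when converting it to $\tildeTheta{\log^2 n}$, which is just book-keeping.
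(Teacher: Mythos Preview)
Your proposal is correct and follows essentially the same approach as the paper: the paper's own proof is just a pointer to Algorithm~\ref{algo:compute:additive:entropy:samp+eval} together with the remark that the query complexity is straightforward, and you have simply spelled out the standard estimation-to-tolerant-testing reduction that is implicit there. One cosmetic point: in your rejection case you conclude $|\hat H - h^*|\geq (\Delta_1+\Delta_2)/2$, which with your acceptance threshold $\leq (\Delta_1+\Delta_2)/2$ leaves the equality boundary ambiguous; using a strict inequality in the threshold (or noting this tie cannot affect the $2/3$ guarantee) resolves this.
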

\begin{proof}
  We describe such a $\mathcal{T}$ in Algorithm~\ref{algo:compute:additive:entropy:samp+eval}; the claimed query complexity is straighforward.
  \vspace{-2ex}
  \begin{algorithm}[h!]
    \begin{algorithmic}
      \Require $\SAMP_{\D}$ and $\EVAL_{\D}$ oracle access, parameters $0\leq \Delta \leq \frac{n}{2}$
      \Ensure Outputs $\hat{H}$ s.t. w.p. at least $2/3$, $\hat{H}\in[H(\D)-\Delta, H(\D)+\Delta/2]$
      \State Set $\tau\eqdef\frac{\frac{\Delta}{n}}{10\log \frac{n}{\Delta}}$ and $m=\clg{\frac{\ln 6}{\Delta^2}\log^2\frac{1}{\tau}}$.
      \State Draw $s_1,\dots,s_m$ from $\D$
      \For{$i=1 \textbf{ to } m$}
        \State With \EVAL, get $X_i \eqdef \log\frac{1}{\D(s_i)}\indic{\D(s_i) \geq \tau}$
      \EndFor
      \State\Return $\hat{H}\eqdef \frac{1}{m} \sum_{i=1}^m X_i$
    \end{algorithmic}
    \caption{\label{algo:compute:additive:entropy:samp+eval}Tester $\mathcal{T}$: \textsc{Estimate-Entropy} }
  \end{algorithm}
\end{proof}
\ifnum\fulldetails=1
  \begin{remark}
    The tester above can easily be adapted to be made multiplicatively robust; indeed, it is enough that the \EVAL oracle only provides $(1+\gamma)$-accurate estimates $\hat{D}(i)$ of the probabilities $\D(i)$, where $\gamma$ is chosen to be $\gamma\eqdef \min(2^{\Delta/3}-1, 1)$
    so that the algorithm will output \newer{with high probability} an additive $(\Delta/2)$-estimate of a quantity
    \[
    H(\D) \geq \shortexpect_{x\sim \D}\left[ \widehat{\varphi}(x) \right] \geq \sum_{x:\D(x) \geq (1+\gamma)\tau} \D(x)\log\frac{1}{\D(x)} - \log(1+\gamma) 
    \geq  H(\D) + n\underbrace{\cdot(1+\gamma)\tau\log(1+\gamma)\tau}_{\geq -2\tau\log\frac{1}{2\tau}} - \frac{\Delta}{3}
    \vspace{-\baselineskip}\]
    and taking for instance $\tau\eqdef\frac{\frac{\Delta}{n}}{30\log \frac{n}{\Delta}}$ ensures the \newer{right-hand-side} is at least $H(\D)-\frac{\Delta}{6}-\frac{\Delta}{3} = H(\D)-\frac{\Delta}{2}$.
  \end{remark}
\else
  \begin{remark}
    The tester above can easily be adapted to be made multiplicatively robust; indeed, it is enough that the \EVAL oracle only provides $(1+\gamma)$-accurate estimates $\hat{D}(i)$ of the probabilities $\D(i)$, where $\gamma$ is set to $\min(2^{\Delta/3}-1, 1)$.
  \end{remark}
\fi

\subsection{Additive estimation of entropy for monotone distributions}\label{sec:supportsize:entropy:monotone}
In the previous section, we saw how to obtain an additive estimate of the entropy of the unknown distribution, using essentially $\bigO{\log^2 n}$ sampling and evaluation queries; moreover, this dependence on $n$ is optimal. However, one may wonder if, by taking advantage of \emph{cumulative} queries, it becomes possible to obtain a better query complexity. \newer{We partially answer this question, focusing on a particular class of distributions for which the \cdfsamp query access seems particularly well-suited: namely the class of \emph{monotone} distributions}\footnote{Recall that a distribution $\D$ \newer{over a totally ordered domain} is said to be monotone if for all $i\in[n-1]$ $\D(i)\geq \D(i+1)$}.

Before describing how this assumption can be leveraged to obtain an exponential improvement in the sample complexity for \cdfsamp query algorithms, we first show that given only \emph{\pdfsamp} access to a distribution promised to be $\littleO{1}$-close to monotone, no such speedup can hold. By establishing (see Remark~\ref{remark:entropy:estimation:additive:monotone:close}) that the savings obtained for (close to) monotone distributions are only possible with \cdfsamp access, this will yield a separation between the two oracles, proving the latter is strictly more powerful.

\subsubsection{Lower bound for \pdfsamp oracles}

\begin{theorem}\label{theorem:entropy:estimation:additive:monotone:lb:pdf}
In the \pdfsamp access model, any algorithm that estimates the entropy of distributions $\bigO{1/\log n}$-close to monotone even to an additive constant must make $\bigOmega{\log n}$ queries to the oracle.
\end{theorem}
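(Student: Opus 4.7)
The plan is to exhibit, via Yao's principle, two families of distributions on $[n]$ that are both $\bigO{1/\log n}$-close to monotone, whose entropies differ by $\bigOmega{1}$, yet which no $\littleO{\log n}$-query \pdfsamp tester can distinguish with success probability $\geq 2/3$. Concretely, I would take as the ``yes'' instance the uniform distribution $\uniform$ on $[n]$ (with $H(\uniform)=\log n$), and as the ``no'' instance the distribution $\D_{k,Z}$ parameterized by a uniformly random $k\in[n]$ and a uniformly random subset $Z\subseteq[n]\setminus\{k\}$ of size $n/\log n$, defined by
\[
  \D_{k,Z}(k)=\frac{1}{n}+\frac{1}{\log n},\qquad \D_{k,Z}(j)=0 \text{ for } j\in Z,\qquad \D_{k,Z}(i)=\frac{1}{n} \text{ otherwise}
\]
(the choice $\abs{Z}=n/\log n$ being exactly what makes the total mass equal to $1$).

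A direct computation yields $H(\D_{k,Z}) = \log n - 1 + \littleO{1}$, so the entropy gap is $\bigOmega{1}$. To verify closeness to monotone, I would compare $\D_{k,Z}$ to the monotone distribution $M$ that places the bump at position $1$ and the zeros on the last $n/\log n$ indices: moving the bump contributes $\bigO{1/\log n}$ to the total variation, and, for a uniformly random $Z$, the symmetric difference between $Z$ and the tail interval $\{n-n/\log n + 1,\dots,n\}$ has size $\bigO{n/\log n}$ with high probability (by a straightforward concentration argument), contributing another $\bigO{1/\log n}$ to $\totalvardist{\D_{k,Z}}{M}$.

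The heart of the proof is the indistinguishability step, which I would establish by a coupling. Consider any adaptive tester \Tester making $q = \littleO{\log n}$ \pdfsamp queries, and call \emph{bad} the event that either a sample returned by $\SAMP_\D$ equals $k$, or an eval query lands in $Z\cup\{k\}$. Since $\D_{k,Z}(k) = \bigO{1/\log n}$ and $\abs{Z\cup\{k\}}/n = \bigO{1/\log n}$, a union bound over the $q$ queries gives $\probaOf{\text{bad}} = \bigO{q/\log n} = \littleO{1}$. On the complement event, the transcripts under $\uniform$ and under $\D_{k,Z}$ can be coupled to coincide: every eval query returns $1/n$ in both cases, and samples from $\D_{k,Z}$ conditioned on not being $k$ are uniform over $[n]\setminus(Z\cup\{k\})$, matching samples from $\uniform$ conditioned on not falling in $Z\cup\{k\}$. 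Hence the total variation distance between the two transcripts is $\littleO{1}$, precluding any $2/3$-correct distinguisher and thus any $\littleO{\log n}$-query additive-constant entropy estimator (since the gap exceeds $1/2$ for large enough $n$).

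The main obstacle I anticipate lies in handling adaptivity carefully: one must argue that even eval queries chosen based on previously observed samples hit $Z\cup\{k\}$ with probability only $\bigO{1/\log n}$ per query. The key observation is that on the good event the responses seen so far are statistically independent of $(k,Z)$ (both the eval answers and the sample values have identical conditional distributions under the two hypotheses), so the posterior of $(k,Z)$ given the transcript remains essentially uniform over the admissible configurations, and this suffices to bound each subsequent query's bad-event contribution by the ``prior'' $\bigO{1/\log n}$.
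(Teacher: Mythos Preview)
Your construction is correct and gives a valid proof, but it differs from the paper's in an interesting way. The paper places a single ``hogging'' element of mass $1-1/\log n$ at position~$1$, spreading the remaining $1/\log n$ mass uniformly over a random set $S$ of size either $\log n$ or $n^{1/4}\log n$; the heavy element absorbs essentially all $\SAMP$ queries with probability $1-\littleO{1}$, so the argument cleanly decouples into ``sampling is useless'' followed by ``$\EVAL$ queries miss the sparse set $S$.'' Your construction instead pits the uniform distribution against a single random bump plus a random zero-set $Z$; here samples are \emph{not} trivially uninformative, and the crux is that the sample distribution under $\D_{k,Z}$ (averaged over $k,Z$, or coupled as you describe) is $\bigO{1/\log n}$-close per query to uniform. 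The upshot is a larger entropy gap ($\approx 1$ versus $1/4$) and a conceptually clean ``yes'' instance, at the cost of a slightly more delicate indistinguishability step. Both approaches are standard; the paper's is more modular, yours more direct.

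Two minor remarks. First, the concentration argument for $\abs{Z\triangle\text{tail}}$ is unnecessary: since $\abs{Z}=\abs{\text{tail}}=n/\log n$, the symmetric difference is deterministically at most $2n/\log n$, which already gives a $\bigO{1/\log n}$ contribution to the total variation for every choice of $Z$. Second, the adaptivity point you flag is exactly right and is the only place requiring care; one clean way to formalize it (avoiding posterior reasoning) is to observe that for any ``clean'' transcript~$\tau$ (all eval answers equal $1/n$), one has $\Pr_{\text{no}}[\tau]=\Pr_{\text{yes}}[\tau]\cdot\Pr_{k,Z}[Q(\tau)\cap(Z\cup\{k\})=\emptyset]$ where $Q(\tau)$ is the set of at most $2q$ queried/sampled points, and the latter probability is $1-\bigO{q/\log n}$ uniformly in~$\tau$; summing then gives $d_{\rm TV}=\bigO{q/\log n}$ directly.
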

\begin{proof}
We will define two families of distributions, $\mathcal{D}_1$ and $\mathcal{D}_2$, such that for any two $D_1$, $D_2$ drawn uniformly at random from $\mathcal{D}_1$ and $\mathcal{D}_2$:
\begin{enumerate}
  \item $D_1$ and $D_2$ are $(2/\log n)$-close to monotone;
  \item $\abs{H(D_1)-H(D_2)} = 1/4$;
  \item no algorithm making $\littleO{\log n}$ queries to a \pdfsamp oracle can distinguish between $D_1$ and $D_2$ with constant probability.
\end{enumerate}
In more detail, the families are defined by the following process: for $K_n\eqdef n^{1/4}$, $\ell_n\eqdef\log n$ and $\gamma_n\eqdef 1/\log n$,
\begin{itemize}
  \item Draw a subset $S\subset\{2,\dots,n\}$ of size $\ell_n$ uniformly at random;
  \item Set $D_1(1)=1-\gamma_n$, and $D_1(i)=\gamma_n/\ell_n=1/\log^2 n$ for all $i\in S$.
\end{itemize}
($D_2$ is obtained similarly, but with a subset $S$ of size $K_n\ell_n= n^{1/4}\log n$ and $D_2(i)=\gamma_n/(\ell_n K_n)$)
Roughly, both distributions have a very heavy first element (whose role is to ``disable'' sampling queries by hogging them with high probability), and then a random subset of size respectively logarithmic or polynomial, on which they are uniform. To determine whether a distribution is drawn from $\mathcal{D}_1$ or $\mathcal{D}_2$, intuitively a testing algorithm has to find a point $i > 1$ with non-zero mass -- and making a query on this point then gives away the type of distribution. However, since sampling queries will almost always return the very first element, finding such a $i > 1$ amounts to finding a needle in a haystack (without sampling) or to sampling many times (to get a non-trivial element) -- and thus requires many queries. Before formalizing this intuition, we prove the first two items of the above claims:
\paragraph*{Distance to monotonicity} By moving all elements of $S$ at the beginning of the support (points $2,\dots,\abs{S}+1$), the distribution would be monotone; so in particular
\[
  \totalvardist{D_i}{\textsc{Monotone}}\textit{} \leq \half\cdot 2\abs{S}\cdot\frac{\gamma_n}{\abs{S}}=2\gamma_n = \frac{2}{\log n}, \qquad i\in\{1,2\}
\]
\paragraph*{Difference of entropy} By their definition, for any two $D_1$, $D_2$, we have
\begin{align*}
\abs{H(D_1)-H(D_2)} &= \abs{ \sum_{i = 2}^n D_1(i)\log D_1(i) - \sum_{i = 2}^n D_2(i)\log D_2(i)  } 
= \gamma_n \log K_n = \frac{1}{4}.
\end{align*}

\noindent We now turn to the main item, the indistinguishability:\vspace{-0.5\baselineskip}
\paragraph*{Telling $D_1$ and $D_2$ apart}
Assume we have an algorithm \Tester, which can estimate entropy of distributions that are $\bigO{1/\log n}$-close to monotone up to an additive $1/3$ making $q(n)=\littleO{\log n}$ queries; we claim that \Tester cannot be correct with probability $2/3$. As argued before, we can further assume without loss of generality that $\Tester$ makes exactly $2q$ queries, $q$ sampling queries and $q$ evaluation ones; and that for any $\SAMP$ query, it gets ``for free'' the result of an evaluation query on the sample. Finally, and as the sampling queries are by definition non-adaptive, this also allows us to assume that \Tester starts by making its $q$ \SAMP queries.

Let $B_1$ be the event that one of the $q$ first queries results in sampling an element $i > 1$ (that is, $B_1$ is the event that the ``hogging element'' fails its role). Clearly, $B_1$ has same probability no matter with of the two families the unknown distribution belongs to, and
\begin{equation}
  \probaOf{B_1} = 1-(1-\gamma_n)^q = 1-2^{q\log(1-1/\log n)} \leq 1-2^{-2q/\log n} = \bigO{q/\log n} = \littleO{1}
\end{equation}
so with probability $1-\littleO{1}$, $\bar{B_1}$ holds. We further condition on this: i.e., the testing algorithm only saw the first element (which does not convey any information) after the sampling stage.

The situation is now as follows: unless one of its queries hits one of the relevant points in the uniform set $S$ (call this event $B_2$), \newer{the algorithm will see in both case the same thing} -- a sequence of points with probability zero. But by construction, in both cases, the probability over the (uniform) choice of the support $S$ to hit a relevant point with one query is either $\ell_n/(n-1)=\log n/(n-1)$ or $K_n\ell_n/(n-1) = n^{1/4}\log n/(n-1)$; so that the probability of finding such a point in $n$ queries is at most
\begin{equation}
  \probaOf{B_2} \leq 1-\left(1-\frac{K_n\ell_n}{n-1}\right)^q = \bigO{ \frac{q\log n}{n^{3/4}} } = \littleO{1}
\end{equation}
Conditioning on $\bar{B}_1\cup \bar{B}_2$, we get that $\mathcal{T}$ sees exactly the same transcript if the distribution is drawn from $\mathcal{D}_1$ or $\mathcal{D}_2$; so overall, with probability $1-\littleO{1}$ it cannot distinguish between the two cases -- contradicting the assumption.
\end{proof}

\subsubsection{Upper bound: exponential speedup for \cdfsamp oracles}
We now establish the positive result in the case of algorithms given \cdfsamp query access. Note that Batu et al. \cite{BDK+:05} already consider the problem of getting a (multiplicative) estimate of the entropy of $\D$, under the assumption that the distribution is monotone; and describe (both in the evaluation-only and sample-only models) $\poly\!\log(n)$-query algorithms for this task, which work by recursively splitting the domain in a suitable fashion to get a partition into near uniform and negligible intervals.

The main insight here (in addition to the mere fact that we allow ourself a stronger type of access to $\D$) is to use, instead of an \emph{ad hoc} partition of the domain, a specific one tailored for monotone distributions, introduced by Birg\'e \cite{Birge:87} -- and which crucially \emph{does not depend on the distribution itself}.

\begin{definition}[Oblivious decomposition]\label{def:birge:obl:decomp}
  Given a parameter $\eps>0$, the corresponding \emph{oblivious decomposition of $[n]$} is the partition $\mathcal{I}_\eps=(I_1,\dots,I_\ell)$, where $\ell=\clg{\frac{\log( \eps n + 1)}{\eps}}=\bigTheta{\frac{\log n }{\eps}}$ and $\abs{I_{k+1}}=(1+\eps)\abs{I_{k}}$, $1\leq k < \ell$.
\end{definition}

\noindent For a distribution $\D$ and parameter \eps, define $\bar{\D}_\eps$ to be the \emph{flattened distribution} with relation to the oblivious decomposition $\mathcal{I}_\eps$:
\[ \forall k\in [\ell], \forall i\in I_k,\quad \bar{\D}_\eps(i) = \frac{D(I_k)}{\abs{I_k}} \]
We insist that while $\bar{\D}_\eps$ (obviously) depends on $\D$, the partition $\mathcal{I}_\eps$ itself does not; in particular, it can be computed prior to getting any sample or information about $\D$.

\begin{theorem}[\cite{Birge:87}]\label{theorem:birge:obl:decomp}
 If $\D$ is monotone non-increasing, then $\totalvardist{D}{\bar{\D}_\eps} \leq \eps$.
\end{theorem}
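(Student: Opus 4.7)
The strategy is to exploit monotonicity both \emph{within} each interval $I_k$ (so that flattening does not distort the mass distribution too much locally) and \emph{across} intervals (so that the geometric growth $|I_{k+1}|=(1+\eps)|I_k|$ contributes the factor of $\eps$). I would decompose
\[
2\totalvardist{\D}{\bar{\D}_\eps} = \sum_{k=1}^{\ell}\sum_{i\in I_k}\abs{\D(i) - \tfrac{\D(I_k)}{\abs{I_k}}}
\]
and analyze each interval separately, then sum.

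For the per-interval bound, fix $k$ and let $a_k=\D(\min I_k)$ and $b_k=\D(\max I_k)$. Since $\D$ is monotone non-increasing on $I_k$ and $\tfrac{\D(I_k)}{\abs{I_k}}\in[b_k,a_k]$ is its mean, each summand $\abs{\D(i)-\tfrac{\D(I_k)}{\abs{I_k}}}$ is at most $a_k-b_k$, and in fact a short argument using that the positive and negative deviations from the mean balance shows the within-interval sum is bounded by $\abs{I_k}(a_k-b_k)$ (one can even save a factor of two, though this is not needed).

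For the cross-interval bound, the key observation is that monotonicity forces $a_k$ and $b_k$ to be sandwiched by the \emph{averages} of the neighboring intervals: since $\min I_k$ comes after every point of $I_{k-1}$, we get $a_k\leq b_{k-1}\leq \D(I_{k-1})/\abs{I_{k-1}}$; similarly $b_k\geq a_{k+1}\geq \D(I_{k+1})/\abs{I_{k+1}}$. Multiplying by $\abs{I_k}$ and using $\abs{I_k}/\abs{I_{k\pm 1}} = (1+\eps)^{\pm 1}$, this gives
\[
\abs{I_k}(a_k-b_k) \;\leq\; (1+\eps)\,\D(I_{k-1}) \;-\; \tfrac{1}{1+\eps}\,\D(I_{k+1}),
\]
with the conventions $\D(I_0)=\D(I_{\ell+1})=0$ (and noting that $k=1$ contributes nothing when $\abs{I_1}=1$, since then $a_1=b_1$).

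Summing over $k$, the right-hand side telescopes nicely: each $\D(I_j)$ gets weight at most $(1+\eps)-\tfrac{1}{1+\eps} = \tfrac{2\eps+\eps^2}{1+\eps}$, and using $\sum_j \D(I_j)=1$ this yields $\sum_k \abs{I_k}(a_k-b_k)\leq 2\eps+O(\eps^2)$, from which $\totalvardist{\D}{\bar{\D}_\eps}\leq \eps$ follows (up to a constant which can be absorbed into a redefinition of $\eps$, or removed by the factor-of-two saving in the within-interval bound). The main technical obstacle is obtaining exactly $\eps$ rather than a constant multiple of $\eps$; this requires being careful with the per-interval inequality (invoking the tight monotone-sequence bound $\sum_i\abs{d_i-\mu}\leq \abs{I_k}(a_k-b_k)/2$) and with the boundary intervals, but neither introduces any conceptual difficulty beyond bookkeeping.
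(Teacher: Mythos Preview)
The paper does not give its own proof of this theorem: it is stated with a citation to Birg\'e~\cite{Birge:87}, and the subsequent remark points to~\cite{DDSVV:13} (Theorem~3.1) for a self-contained discrete version. So there is no in-paper argument to compare against.

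That said, your outline is essentially the standard proof (the one in~\cite{DDSVV:13}), and the two main ingredients you identify---bounding the within-interval $\ell_1$ deviation by $\abs{I_k}(a_k-b_k)$ (or half of it), and using monotonicity across intervals together with the geometric growth $\abs{I_k}/\abs{I_{k-1}}=1+\eps$ to telescope---are exactly right. The per-interval inequality $\sum_{i\in I_k}\abs{\D(i)-\mu_k}\leq \tfrac{1}{2}\abs{I_k}(a_k-b_k)$ you invoke is correct and can be proved cleanly via $\min\bigl(j(a_k-\mu_k),(\abs{I_k}-j)(\mu_k-b_k)\bigr)\leq \tfrac{1}{4}\abs{I_k}(a_k-b_k)$ by AM--GM, where $j$ is the number of above-mean points.

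One caveat on what you call ``bookkeeping'': the claim that ``each $\D(I_j)$ gets weight at most $(1+\eps)-\tfrac{1}{1+\eps}$'' fails for the first one or two surviving indices after you drop the singleton intervals---those get coefficient $(1+\eps)$ with no compensating negative term, and $\D(I_j)$ there is \emph{not} a priori small. The fix is genuine but easy: in the discrete oblivious decomposition the first $\Theta(1/\eps)$ intervals are singletons, so if $k_0$ is the last singleton index then monotonicity gives $k_0\,\D(I_{k_0})\leq \sum_{j\leq k_0}\D(I_j)\leq 1$, hence $\D(I_{k_0})=O(\eps)$, and similarly for $\D(I_{k_0+1})$. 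With this observation your telescoping yields $O(\eps)$ as claimed; getting the sharp constant $1$ (rather than a small absolute constant) then really is just careful accounting.
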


\begin{remark} A proof of this theorem, self-contained and phrased in terms of discrete distributions (whereas the original paper by Birg\'e is primarily intended for continuous ones) can be found in \cite{DDSVV:13} -- Theorem 3.1.
\end{remark}

\begin{corollary}\label{coro:birge:decomposition:robust}
  Suppose $\D$ is \eps-close to monotone non-increasing. Then $\totalvardist{D}{\bar{\D}_\eps} \leq 3\eps$; furthermore,  $\bar{\D}_\eps$ is also \eps-close to monotone non-increasing.
\end{corollary}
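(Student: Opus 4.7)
The plan is to reduce this to Theorem~\ref{theorem:birge:obl:decomp} by triangle inequality, using the fact that the flattening operation is a contraction with respect to total variation distance. Let $M$ be a monotone non-increasing distribution witnessing the assumption, i.e., $\totalvardist{\D}{M}\leq\eps$, and let $\bar{M}_\eps$ denote its flattening along $\mathcal{I}_\eps$.

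The first key observation I would establish is that the map $\D \mapsto \bar{\D}_\eps$ is $1$-Lipschitz in total variation distance: indeed, from the definition of $\bar{\D}_\eps$ and $\bar{M}_\eps$ as piecewise-constant on each $I_k$,
\begin{equation*}
  2\totalvardist{\bar{\D}_\eps}{\bar{M}_\eps} = \sum_{k=1}^\ell \abs{I_k}\cdot\abs{\frac{\D(I_k)}{\abs{I_k}}-\frac{M(I_k)}{\abs{I_k}}} = \sum_{k=1}^\ell \abs{\D(I_k)-M(I_k)} \leq \sum_{i\in[n]} \abs{\D(i)-M(i)} = 2\totalvardist{\D}{M},
\end{equation*}
by the triangle inequality applied within each interval.

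With this in hand, the first statement follows immediately: by the triangle inequality and Theorem~\ref{theorem:birge:obl:decomp} (applied to the genuinely monotone $M$),
\begin{equation*}
 \totalvardist{\D}{\bar{\D}_\eps} \leq \totalvardist{\D}{M} + \totalvardist{M}{\bar{M}_\eps} + \totalvardist{\bar{M}_\eps}{\bar{\D}_\eps} \leq \eps + \eps + \eps = 3\eps.
\end{equation*}

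For the second statement, the main point I would verify is that $\bar{M}_\eps$ is itself monotone non-increasing: for any two consecutive intervals $I_k, I_{k+1}$, every element of $I_k$ precedes every element of $I_{k+1}$, so monotonicity of $M$ gives $\min_{i\in I_k} M(i) \geq \max_{j\in I_{k+1}} M(j)$, whence $\D(I_k)/\abs{I_k} \geq \D(I_{k+1})/\abs{I_{k+1}}$. Since $\bar{M}_\eps$ is a monotone non-increasing distribution, the contractivity bound above gives $\totalvardist{\bar{\D}_\eps}{\textsc{Monotone}} \leq \totalvardist{\bar{\D}_\eps}{\bar{M}_\eps} \leq \totalvardist{\D}{M} \leq \eps$, as claimed. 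The only nontrivial step is the contractivity lemma; once it is stated, everything else is a clean application of the triangle inequality and of Birgé's theorem.
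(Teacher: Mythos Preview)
Your argument is correct and is precisely the standard route the paper has in mind (the corollary is stated without proof): contractivity of the flattening map, combined with the triangle inequality and Birg\'e's theorem. One small slip to fix: in the monotonicity step you write ``whence $\D(I_k)/\abs{I_k} \geq \D(I_{k+1})/\abs{I_{k+1}}$'', but this inequality should be about $M$, not $\D$ (you are showing $\bar{M}_\eps$ is monotone, and the conclusion follows from the chain $M(I_k)/\abs{I_k} \geq \min_{i\in I_k} M(i) \geq \max_{j\in I_{k+1}} M(j) \geq M(I_{k+1})/\abs{I_{k+1}}$).
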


Finally, we shall also need the following well-known result relating total variation distance and difference of entropies (see e.g. \cite{Zhang:07}, Eq. (4)):
\begin{fact}[Total variation and Entropy]\label{fact:tv:entropy}
Let $\D_1$, $\D_2$ be two distributions on $[n]$ such that $\totalvardist{\D_1}{\D_2} \leq \alpha$, for $\alpha\in[0,1]$. Then $\abs{H(\D_1)-H(\D_2)} \leq \alpha\log(n-1) +h_2(\alpha) \leq \alpha\log\frac{n}{\alpha} + (1-\alpha)\log\frac{1}{1-\alpha}$, where $h_2$ is the binary entropy function\footnote{\newer{That is, $h_2(p)=-p\log p - (1-p)\log(1-p)$ is the entropy of a Bernoulli random variable with parameter $p$.}}.
\end{fact}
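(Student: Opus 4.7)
My plan is to prove the bound via a maximal coupling together with a subadditivity-type argument on entropy, which is the standard route for Fano-style inequalities. First I would invoke the well-known fact that for any two distributions $\D_1,\D_2$ on $[n]$ with $\totalvardist{\D_1}{\D_2}=\alpha$, there is a joint distribution (maximal coupling) of random variables $(X_1,X_2)$ with marginals $\D_1$ and $\D_2$ such that $\probaOf{X_1\neq X_2}=\alpha$. This will be the key structural tool.

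Next I would use the chain rule: $H(X_1,X_2)=H(X_2)+H(X_1\mid X_2)=H(X_1)+H(X_2\mid X_1)$, which together with the nonnegativity of conditional entropy gives
\[ \abs{H(\D_1)-H(\D_2)} = \abs{H(X_1)-H(X_2)} \leq \max\bigl( H(X_1\mid X_2),\, H(X_2\mid X_1) \bigr). \]
To bound $H(X_1\mid X_2)$, I would introduce the indicator $E\eqdef\indic{X_1\neq X_2}$ and use that adding a variable can only increase entropy, so $H(X_1\mid X_2)\leq H(X_1,E\mid X_2)=H(E\mid X_2)+H(X_1\mid X_2,E)$. The first term is at most $H(E)=h_2(\alpha)$ since $E$ is Bernoulli with parameter $\alpha$. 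For the second, condition on $E$: when $E=0$ we have $X_1=X_2$ so the conditional entropy is $0$; when $E=1$, $X_1$ takes at most $n-1$ values (since it must differ from the given $X_2$), so its conditional entropy is at most $\log(n-1)$. Weighting by $\probaOf{E=1}=\alpha$ yields $H(X_1\mid X_2)\leq \alpha\log(n-1)+h_2(\alpha)$, and by the symmetric argument the same bound holds for $H(X_2\mid X_1)$. This establishes the first inequality.

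The second inequality is just algebra: using $\log(n-1)\leq \log n$ and expanding $h_2(\alpha)=\alpha\log(1/\alpha)+(1-\alpha)\log(1/(1-\alpha))$,
\[ \alpha\log(n-1)+h_2(\alpha) \leq \alpha\log n + \alpha\log\frac{1}{\alpha} + (1-\alpha)\log\frac{1}{1-\alpha} = \alpha\log\frac{n}{\alpha} + (1-\alpha)\log\frac{1}{1-\alpha}. \]
The main obstacle, if any, is the coupling step: one must justify the existence of the maximal coupling (a short standard construction, taking the diagonal mass $\min(\D_1,\D_2)$ and coupling the residuals independently), and then be careful that in the "bad" case $E=1$, $X_1$ indeed ranges over a set of size at most $n-1$ conditioned on the value of $X_2$, so that $\log(n-1)$ (rather than $\log n$) is the correct bound. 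Everything else is routine manipulation of entropies.
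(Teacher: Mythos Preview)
The paper does not prove this statement at all: it is stated as a known fact and attributed to \cite{Zhang:07} (Eq.~(4)), with no argument given. Your coupling-plus-Fano argument is the standard proof of this inequality and is correct; in particular, the chain-rule decomposition $H(X_1\mid X_2)\le H(E\mid X_2)+H(X_1\mid X_2,E)$ together with the maximal coupling is exactly how the bound $\alpha\log(n-1)+h_2(\alpha)$ is usually derived, and your algebra for the second inequality is fine.

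One small remark worth tightening: the statement assumes only $\totalvardist{\D_1}{\D_2}\le\alpha$, whereas the maximal coupling yields $\probaOf{X_1\neq X_2}=\totalvardist{\D_1}{\D_2}\eqdef\beta\le\alpha$, so your argument actually gives the bound $\beta\log(n-1)+h_2(\beta)$. Passing from $\beta$ to $\alpha$ requires the map $x\mapsto x\log(n-1)+h_2(x)$ to be nondecreasing on the relevant range, which holds for $x\le (n-1)/n$ (its derivative is $\log\!\big((n-1)(1-x)/x\big)$). This is harmless in the paper's application, where $\alpha=\bigO{1/\log n}$ is small, but if you want a self-contained write-up you should either add this monotonicity observation or simply state and prove the bound with $\alpha=\totalvardist{\D_1}{\D_2}$.
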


\paragraph*{High-level idea} Suppose we use the oblivious decomposition from Definition~\ref{def:birge:obl:decomp}, with small parameter $\alpha$ (to be determined later), to reduce the domain into $\ell=\littleO{n}$ intervals. Then, we can set out to approximate the entropy of the induced \emph{flat} distribution -- that we can efficiently simulate from the \cdfsamp oracles, roughly reducing the complexity parameter from $n$ to $\ell$; it only remains to use the previous approach, slightly adapted, on this flat distribution. Of course, we have to be careful not to incur too much a loss at each step, where we first approximate $H(D)$ by $H(\bar{\D})$, and then specify our cutoff threshold to only consider significant contributions to $H(\bar{\D})$.

\paragraph*{Details} Consider the Birg\'e decomposition of $[n]$ into $\ell=\bigTheta{\log(n\alpha)/\alpha}$ intervals (for $\alpha$ to be defined shortly). Theorem~\ref{theorem:birge:obl:decomp} ensures the corresponding (unknown) flattened distribution $\bar{\D}$ is $\alpha$-close to $\D$; which, by the fact above, implies that
\begin{equation}
  \abs{ H(\bar{\D}) - H(D) } \leq \alpha\left(\log\frac{n}{\alpha} + 2\right)
\end{equation}
Taking $\alpha\eqdef\bigTheta{\Delta/\log n}$, the \newer{right-hand-side} is at most $\Delta/2$; so that it is now sufficient to estimate $H(\bar{\D})$ to $\pm\Delta/2$, where both sampling and evaluation access to $\bar{\D}$ can easily be simulated from the $\CDFEVAL_{\D}$ and $\SAMP_{\D}$ oracles. But although $\bar{\D}$ is a distribution on $[n]$, its ``actual'' support is morally only the $\ell=\tildeTheta{\log^2 n/\Delta}$. Indeed, we may write the entropy of $\bar{\D}$ as
\begin{align*}
  H(\bar{\D}) = \sum_{k=1}^\ell \sum_{x\in I_k} \bar{\D}(x) \log \frac{1}{\bar{\D}(x)} = \sum_{k=1}^\ell \sum_{x\in I_k} \frac{D(I_k)}{\abs{I_k}} \log \frac{\abs{I_k}}{D(I_k)}
             = \sum_{k=1}^\ell \D(I_k) \log \frac{\abs{I_k}}{D(I_k)} = \shortexpect_{k\sim \bar{\D}}\left[ \log \frac{1}{d_k} \right]
\end{align*}
where $d_k=\frac{D(I_k)}{\abs{I_k}}\approx (1+\alpha)^{-k} \D(I_k)$.

\ifnum\fulldetails=1
  As in the previous section, we can then define a cutoff threshold $\tau$ (for $d_k$) and only estimate $\shortexpect_{k\sim \bar{\D}}\left[ \log \frac{1}{d_k}\indic{d_k\geq\tau} \right]$, for this purpose, we need $\ell\cdot\tau\log 1/\tau$ to be at most $\Delta/4$, i.e.
  \[
    \tau\eqdef \bigTheta{ \frac{ \Delta/\ell }{ \log\Delta/\ell } } = \tildeTheta{ \frac{\Delta^2}{\log^2 n}  }
  \]
  and to get with high probability a $\Delta/4$-approximation, it is as before sufficient to make $m=\bigO{ \Delta^2/\log^2(1/\tau) } = \tildeO{\frac{\log^2\frac{\log n}{\Delta}}{\Delta^2}}$ queries.
\else
  As in the previous section, we can then define a cutoff threshold $\tau$ (for $d_k$) and only estimate $\shortexpect_{k\sim \bar{\D}}\big[ \log \frac{1}{d_k}\indic{d_k\geq\tau} \big]$, for this purpose, we need $\ell\cdot\tau\log 1/\tau$ to be at most $\Delta/4$, i.e.
  $\tau = \tildeTheta{ {\Delta^2}/{\log^2 n}  }$
  and to get with high probability a $\Delta/4$-approximation, it is as before sufficient to make $m=\bigO{ \Delta^2/\log^2(1/\tau) } = \tildeO{{\log^2(\frac{\log n}{\Delta})/}{\Delta^2}}$ queries.
\fi

\begin{theorem}\label{theorem:entropy:estimation:additive:monotone}
In the \cdfsamp access model, there exists an algorithm for monotone distributions estimating the entropy up to an additive $\Delta$, with sample complexity $\tildeO{{\log^2\frac{\log n}{\Delta}}/{\Delta^2}}$.
\end{theorem}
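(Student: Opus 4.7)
The plan is to combine Birg\'e's oblivious decomposition (Definition~\ref{def:birge:obl:decomp}) with the cutoff/Chernoff approach used in the proof of Theorem~\ref{theorem:entropy:estimation:additive}, so as to effectively reduce the domain size from $n$ down to roughly $\ell\eqdef\bigTheta{\log n/\alpha}$ intervals. Concretely, I would fix the parameter $\alpha \eqdef \bigTheta{\Delta/\log n}$ and consider the partition $\mathcal{I}_\alpha=(I_1,\dots,I_\ell)$. Since $\D$ is monotone, Theorem~\ref{theorem:birge:obl:decomp} guarantees that the flattened distribution $\bar{\D}_\alpha$ satisfies $\totalvardist{\D}{\bar{\D}_\alpha}\leq \alpha$, so by Fact~\ref{fact:tv:entropy} we get $\abs{H(\bar{\D}_\alpha) - H(\D)}\leq \alpha\log(n/\alpha) + h_2(\alpha) \leq \Delta/2$ for our choice of $\alpha$. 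Thus it is enough to approximate $H(\bar{\D}_\alpha)$ to within additive $\pm\Delta/2$.

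Next I would exploit the fact that both \SAMP and \EVAL access to $\bar{\D}_\alpha$ can be simulated using only \cdfsamp access to $\D$, using that the partition $\mathcal{I}_\alpha$ is oblivious (independent of $\D$) and thus known to the algorithm. An evaluation of $\bar{\D}_\alpha(x)$ reduces to computing $\D(I_k)/\abs{I_k}$ for the interval $I_k\ni x$, which takes just two \CDFEVAL queries at the endpoints of $I_k$; and a sample from $\bar{\D}_\alpha$ is obtained by drawing $x\sim\D$ via $\SAMP_\D$, locating its containing interval $I_k$, and returning a uniform random element of $I_k$. With this reduction in hand, rewrite
\[
H(\bar{\D}_\alpha) = \sum_{k=1}^\ell \D(I_k)\log\frac{\abs{I_k}}{\D(I_k)} = \shortexpect_{k\sim \bar{\D}_\alpha}\!\left[\log\frac{1}{d_k}\right],\qquad d_k\eqdef \frac{\D(I_k)}{\abs{I_k}}.
\]

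I would then mimic the cutoff argument of Theorem~\ref{theorem:entropy:estimation:additive}, but now applied on the reduced ``domain'' of $\ell$ intervals. Choose a threshold $\tau$ such that the tail contribution $\ell\cdot\tau\log(1/\tau)$ is at most $\Delta/4$; this calls for $\tau = \tildeTheta{\Delta^2/\log^2 n}$. Then $\shortexpect_{k\sim\bar{\D}_\alpha}[\log(1/d_k)\indic{d_k\geq\tau}]$ differs from $H(\bar{\D}_\alpha)$ by at most $\Delta/4$, and the truncated random variable is bounded in $[0,\log(1/\tau)]$. An additive Chernoff bound then gives a $\Delta/4$-accurate empirical estimate with probability $2/3$ using
\[
m = \bigO{\frac{\log^2(1/\tau)}{\Delta^2}} = \tildeO{\frac{\log^2\frac{\log n}{\Delta}}{\Delta^2}}
\]
samples, each requiring $\bigO{1}$ \CDFEVAL queries for the associated evaluation.

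The three error sources (Birg\'e approximation, truncation at $\tau$, and Chernoff concentration) each contribute $O(\Delta)$, yielding the claimed additive guarantee. The main delicate point is the joint choice of $\alpha$ and $\tau$: $\alpha$ must be small enough that $\bar{\D}_\alpha$ has entropy close to $\D$'s, but not so small that $\ell$ explodes (since $\tau$ and hence $\log(1/\tau)$ grow with $\ell$, which is what ultimately controls the sample complexity). Balancing these to land at $\tildeO{\log^2(\log n/\Delta)/\Delta^2}$ rather than $\tildeO{\log^2 n/\Delta^2}$ is precisely the quantitative gain over Theorem~\ref{theorem:entropy:estimation:additive} coming from the monotonicity assumption plus \CDFEVAL access.
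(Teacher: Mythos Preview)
Your proposal is correct and follows essentially the same approach as the paper: choose $\alpha=\Theta(\Delta/\log n)$ in Birg\'e's oblivious decomposition, use Fact~\ref{fact:tv:entropy} to pass from $H(\D)$ to $H(\bar{\D}_\alpha)$, simulate \pdfsamp access to $\bar{\D}_\alpha$ via \CDFEVAL queries on interval endpoints, and then apply the cutoff/Chernoff argument of Theorem~\ref{theorem:entropy:estimation:additive} on the reduced domain of $\ell$ intervals with threshold $\tau=\tildeTheta{\Delta^2/\log^2 n}$. Your exposition is in fact slightly more explicit than the paper's on how the simulation of $\SAMP_{\bar{\D}_\alpha}$ works, and you state the Chernoff sample count $m=\bigO{\log^2(1/\tau)/\Delta^2}$ correctly.
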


\begin{remark}\label{remark:entropy:estimation:additive:monotone:close}
We remark that the above result and algorithm (after some minor changes in the constants) still applies if $\D$ is only guaranteed to be $\bigO{1/\log n}$-\emph{close} to monotone; indeed, as stated in Corollary~\ref{coro:birge:decomposition:robust}, the oblivious decomposition is (crucially) {robust}, and $\bar{\D}$ will still be $\bigO{\eps}$-close to $\D$.
\end{remark}


\subsection{Additive estimation of support size}
We now turn to the task of estimating the effective support size of the distribution: given the promise that $\D$ puts on every element of the domain either no weight or at least some minimum probability mass $1/n > 0$, the goal is to output a good estimate (up to $\pm\eps n$) of the number of elements in the latter situation.

\begin{theorem}\label{theorem:support:estimation:additive:extended:access}
In the \pdfsamp access model, there exists an algorithm \textsc{Estimate-Support} that, on input a threshold $n\in\N^\ast$ and a parameter $\eps > 0$, and given access to a distribution $\D$ (over an arbitrary set) satisfying 
\[ \min_{x\in\supp D} \D(x) \geq \frac{1}{n} \]
estimates the support size $\abs{\supp D}$ up to an additive $\eps n$, with query complexity $\bigO{\frac{1}{\eps^2}}$.
\end{theorem}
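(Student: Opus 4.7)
The plan is to apply the very same expectation-based template that the paper has highlighted as the defining feature of the \pdfsamp model. The key observation is the identity
\[
\frac{\abs{\supp D}}{n} = \sum_{x \in \supp D} \D(x) \cdot \frac{1}{n\D(x)} = \shortexpect_{x\sim\D}\!\left[\frac{1}{n\D(x)}\right],
\]
so estimating $\abs{\supp D}$ to within $\pm\eps n$ amounts to estimating this expectation to within $\pm\eps$. Setting $\Phi(x,\D(x)) \eqdef \frac{1}{n\D(x)}$, the promise $\D(x)\geq 1/n$ for every $x\in\supp D$ guarantees that $\Phi(x,\D(x))\in[0,1]$ on every sample actually drawn (division by zero cannot occur, since points outside the support are never sampled).

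With this in place, the algorithm is immediate: draw $m=\bigTheta{1/\eps^2}$ samples $s_1,\dots,s_m$ from $\SAMP_\D$, query $\EVAL_\D$ on each to form $Y_i\eqdef \frac{1}{n\D(s_i)}\in[0,1]$, and return $n\cdot\hat{s}$ where $\hat{s}\eqdef\frac{1}{m}\sum_i Y_i$. The $Y_i$ are i.i.d., bounded in $[0,1]$, with mean $\abs{\supp D}/n$, so an additive Chernoff/Hoeffding bound (in the same form used in the proof of Theorem~\ref{theorem:tolerant:tester:uniform}, Eq.~\eqref{eq:chernoff:dhat}) yields $\probaOf{\abs{\hat{s} - \abs{\supp D}/n} \geq \eps} \leq 2e^{-2\eps^2 m} \leq 1/3$ for a suitable constant in $m$. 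Multiplying through by $n$ gives the claimed additive $\eps n$ guarantee, with total query complexity $2m = \bigO{1/\eps^2}$ (one $\SAMP$ and one $\EVAL$ per iteration).

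There is no serious obstacle: the result is essentially a corollary of the general ``$\shortexpect_{i\sim\D}[\Phi(i,\D(i))]$'' paradigm the paper already leverages for tolerant uniformity testing and for additive entropy estimation. The only thing to double-check is that boundedness of $\Phi$ holds \emph{on the support}, which is exactly what the lower bound hypothesis $\D(x)\geq 1/n$ secures; indeed, without this hypothesis $\Phi$ would be unbounded and Hoeffding would no longer give the desired concentration. The fact that the theorem is stated for distributions over an arbitrary set (not just $[n]$) causes no issue, since the algorithm never enumerates the domain and only manipulates sampled points, on which \EVAL is well-defined. Finally, as with the tolerant uniformity tester of Remark in Section~\ref{sec:uniformity:equivalence:identity:testing:tolerant}, the procedure is straightforwardly robust to multiplicative noise on \EVAL: a $(1+\gamma)$-accurate $\hat{\D}(s_i)$ induces a multiplicative $(1+\gamma)$ error on each $Y_i$, and taking $\gamma=\Theta(\eps)$ preserves the additive $\eps n$ guarantee without affecting the query complexity.
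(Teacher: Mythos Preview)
Your proposal is correct and is essentially the paper's own argument: the paper likewise draws $m=\Theta(1/\eps^2)$ samples, queries $\D(x_i)$, and averages $\frac{1}{\D(x_i)}\indic{\D(x_i)\geq 1/n}$ (your $Y_i$ is just this divided by $n$, and the indicator is always $1$ on sampled points by the promise), then applies an additive Chernoff bound. The only cosmetic difference is that the paper adds a second case for $\eps \leq 2/\sqrt{n\ln 3n}$, where it instead draws $n\ln 3n = O(1/\eps^2)$ samples and returns the number of distinct elements seen; this case split is not needed for the stated guarantee, and your single-case argument already covers all $\eps>0$.
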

\begin{proof}
  Write $k\eqdef \abs{\supp D}$. We describe \textsc{Estimate-Support} which outputs (w.p. at least 2/3) an estimate as required:
  \begin{description}
    \item[If $\eps > \frac{2}{\sqrt{n\ln 3n}}$:]
The algorithm will draw $m=\clg{\frac{4}{\eps^2}}$ samples $x_1,\dots, x_m$ from $\D$, query their probability mass $\D(x_i)$, and output $\hat{k}=\clg{Y}$, where
  \[
    Y \eqdef \frac{1}{m}\sum_{i=1}^m \frac{\indic{\D(x_i) \geq \frac{1}{n} }}{\D(x_i)}
  \]
    \item[If $\eps \leq \frac{2}{\sqrt{n\ln 3n}}$:] in this case, \textsc{Estimate-Support} just draws $m=n\ln 3n=\bigO{\frac{1}{\eps^2}}$ samples $x_1,\dots, x_m$ from $\D$, and returns the number $\hat{k}$ of distincts elements it got (no query access is needed in this case).
  \end{description}

\paragraph*{Analysis} In the first (and interesting) case, let $\phi$ be the function defined over the coset of $\D$ by $\phi(x)=\frac{1}{\D(x)}\cdot\indic{\D(x) \geq \frac{1}{n} }$, so that $\shortexpect_{x\sim\D}[ \phi(x) ] = \sum_{x: \D(x) > \frac{1}{n}} \D(x)\cdot\frac{1}{\D(x)} = \abs{\setOfSuchThat{x}{\D(x) > \frac{1}{n}}}=k$; and as the r.v. $\phi(x_1),\dots, \phi(x_m)$ are i.i.d and taking value in $[0,n]$, an additive Chernoff bound yields
\[
  \probaOf{ \abs{Y - k }  > \frac{\eps n}{2} } \leq 2e^{-\frac{\eps^2 m}{2}} < \frac{1}{3}
\]
Conditioned on this not happening, $k + \frac{\eps}{2}n \leq Y \leq \hat{k}\leq Y + 1 \leq k + \frac{\eps}{2}n + 1 \leq k + \eps n$ (as $\eps > \frac{2}{n}$), and $\hat{k}$ is as stated.

Turning now to the second case, observe first that the promise on $\D$ implies that $1 \leq k \leq n$. It is sufficient to bound the probability that an element of the support is \emph{never} seen during the $m$ draws -- let $F$ denote this event. By a union bound,
\[
  \probaOf{ F } \leq k\cdot\left(1-\frac{1}{n}\right)^m \leq n e^{n\ln (3n)\ln(1-\frac{1}{n})} \leq n e^{-\ln 3n} = \frac{1}{3} 
\]
so w.p. at least $2/3$, every element of the support is drawn, and \textsc{Estimate-Support} returns (exactly) $k$.
\end{proof}


\subsubsection{Lower bound}

In this subsection, we show that the upper bound of Theorem~\ref{theorem:support:estimation:additive:extended:access} is tight.

\begin{theorem}\label{theorem:tolerant:tester:support:size:lb} In the \pdfsamp access model, \eps-additively estimating support size requires query complexity $\bigOmega{\frac{1}{\eps^2}}$.
\end{theorem}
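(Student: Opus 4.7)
The plan is to reduce from the coin-bias distinguishing problem of Fact~\ref{fact:fair:biased:coin}: telling apart a $\bernoulli{1/2}$ coin from a $\bernoulli{1/2+c\eps}$ one (for a suitable constant $c$) requires $\bigOmega{1/\eps^2}$ tosses. I will build a family of distributions whose \pdfsamp queries can be answered on the fly using at most one fresh coin toss per query, and on which any $\eps n$-additive support-size estimator must unavoidably separate the two coin biases; the lower bound then follows immediately. The main technical point is ensuring that this lazy on-the-fly simulation is distributionally faithful to the target distribution.

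\textbf{Construction.} Fix $K\eqdef c/\eps^2$ for a sufficiently large constant $c$, and let $\Pi=(A_k,B_k)_{k\in[K]}$ be a uniformly random partition of $[n]$ into $K$ disjoint pairs of equal-sized buckets, each of size $n/(2K)$. Given a bias $p\in[0,1]$, define the random distribution $\D^p$ by drawing $X_k\sim\bernoulli{p}$ independently for each pair and setting: if $X_k=0$, every element of $A_k\cup B_k$ has mass $1/n$ (``profile~0,'' contributing $n/K$ elements to the support); if $X_k=1$, every element of $A_k$ has mass $2/n$ while every element of $B_k$ has mass $0$ (``profile~1,'' contributing $n/(2K)$). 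In either profile the pair total is exactly $1/K$, so $\D^p$ is a valid distribution on $[n]$ whose nonzero masses are at least $1/n$. The support size $\abs{\supp{\D^p}}=n-(n/(2K))\sum_{k=1}^K X_k$ has expectation $n(1-p/2)$ and, by an additive Chernoff bound over the $K$ independent $X_k$'s, concentrates within $\eps n/100$ of that expectation with probability at least $99/100$ when $c$ is large enough. Taking $p^+=1/2$ and $p^-=1/2+20\eps$ (valid for $\eps$ small enough), the support sizes of $\D^{p^+}$ and $\D^{p^-}$ thus lie near $3n/4$ and $3n/4-10\eps n$ respectively, and so differ by more than $2\eps n$ with high probability; hence any correct $\eps n$-additive estimator must decide which of the two families its input belongs to with probability at least $2/3-\littleO{1}$.

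\textbf{Reduction.} Suppose toward contradiction that $\Tester$ is an $\eps n$-additive support-size estimator making $q=\littleO{1/\eps^2}$ queries in the \pdfsamp model. Given $\SAMP_{\text{coin}}$ access to a coin of unknown bias $p^\ast\in\{p^+,p^-\}$, we build a distinguisher $\Algo$ following the lazy-simulation strategy of Theorem~\ref{theorem:tolerant:tester:uniform:lb}: $\Algo$ draws $\Pi$ upfront, maintains a cache $C$ of pairs whose profile has already been committed, and answers $\Tester$'s queries on the fly. On an $\EVAL$ query at $i\in[n]$, if the containing pair $\pi(i)$ is not in $C$, $\Algo$ requests a fresh bit $X_{\pi(i)}$ from $\SAMP_{\text{coin}}$, commits it, and returns the mass at $i$ dictated by $\Pi$ and $X_{\pi(i)}$. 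On a $\SAMP$ query, since every pair has mass exactly $1/K$ under either profile, $\Algo$ first draws a pair $k$ uniformly at random, commits its profile with one fresh toss if needed, and then returns a uniformly random element of $A_k\cup B_k$ when $X_k=0$ or of $A_k$ when $X_k=1$. By lazy evaluation, the joint distribution of $\Tester$'s view matches \pdfsamp access to $\D^{p^\ast}$ exactly, and $\Algo$ consumes at most $q$ coin tosses overall. Declaring $p^+$ if $\Tester$'s estimate is closer to $3n/4$ than to $3n/4-10\eps n$ (and $p^-$ otherwise), $\Algo$ distinguishes the two biases with probability at least $2/3-\littleO{1}$ from only $\littleO{1/\eps^2}$ coin tosses, contradicting Fact~\ref{fact:fair:biased:coin} and establishing the $\bigOmega{1/\eps^2}$ lower bound.
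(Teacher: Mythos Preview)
Your proof is correct and follows essentially the same approach as the paper: reduce from the biased-coin problem via a paired construction in which each pair has fixed total mass (so $\SAMP$ queries can be simulated by first picking a pair uniformly), and answer queries by lazily committing each pair's profile with a single fresh coin toss. The only cosmetic difference is that you group elements into $K=\Theta(1/\eps^2)$ buckets (borrowing the machinery of Theorem~\ref{theorem:tolerant:tester:uniform:lb}) and use a random partition, whereas the paper uses single-element pairs $(i,n-i)$ with $n\gg 1/\eps^2$ and no random partition---the latter is slightly simpler, since support size is label-invariant and the randomness of $\Pi$ plays no role.
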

 \begin{proof}
Without loss of generality, suppose $n$ is even, and let $k=\frac{n}{2}$. For any $p\in[0,1]$, consider the following process $\Phi_p$, which yields a random distribution $\D_p$ on $[n]$ (See Fig.\ref{fig:construction:lb:support:size:D}):
\begin{itemize}
  \item draw $k$ i.i.d. random variables $X_1,\dots, X_k\sim\bernoulli{p}$;
  \item for $i\in[k] $, set $\D(i)=\frac{1}{n}(1+X_i)$ and  $\D(n-i)=\frac{1}{n}(1-X_i)$
\end{itemize}
Note that by construction $\D(i)+D(n-i)=\frac{2}{n}$ for all $i\in[k]$.

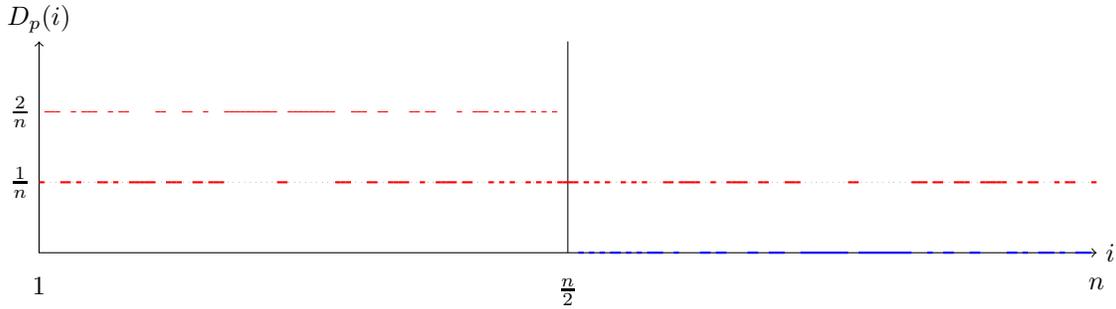
\begin{figure}[!ht]\centering
  \begin{tikzpicture}[x=2pt, y=2pt]
  \pgfmathsetmacro{\xmax}{ 200 }
  \pgfmathsetmacro{\ymax}{40}
  \pgfmathsetmacro{\uniformvalue}{1/3}
    \draw [<->] (0,\ymax) node[above] {$\D_p(i)$} -- (0,0) -- (\xmax,0)  node[right] {$i$};

  \pgfmathsetmacro{\rbuckets}{\xmax} 
  \pgfmathparse{{int(floor(\rbuckets/2)-1))}}\let\rbucketshalf\pgfmathresult
  \pgfmathsetmacro{\buckwidth}{(\xmax/\rbuckets)}
  
      \node [left] at (0,{\ymax*(2*\uniformvalue)}) {$\frac{2}{n}$};
      \node [left] at (0,{\ymax*(\uniformvalue)}) {$\frac{1}{n}$};
      \node [below] at (\xmax/2,-3) {$\frac{n}{2}$};
      \node [below] at (\xmax,-3) {$n$};
      \node [below] at (0,-3) {$1$};
      \draw[ultra thin, black, dotted]  (0,{\ymax*(\uniformvalue)}) -- (\xmax,{\ymax*(\uniformvalue)} );
    \draw[thin] ({\xmax/2},0) -- ({\xmax/2},\ymax);
    
        \foreach \i in {0,...,\rbucketshalf}{
          \pgfmathsetmacro{\cointoss}{floor(random(0,9)/5)}; 
          \pgfmathsetmacro{\fstHigh}{{\ymax*(\uniformvalue*(1+\cointoss))}};
          \pgfmathsetmacro{\sndHigh}{{\ymax*(\uniformvalue*(1-\cointoss))}};
          
          \pgfmathtruncatemacro{\iCoinTossOne}{int(\cointoss))}
          \ifnum\iCoinTossOne=0
            \xdef\linestyle{red};
            \xdef\linethickness{thick};
            \xdef\linethicknesssnd{thick};
          \else
            \xdef\linestyle{blue};
            \xdef\linethickness{solid};
            \xdef\linethicknesssnd{thick};
          \fi
          
          \draw[\linethickness,red]  ({\i*\buckwidth},{\fstHigh}) -- ({(\i+1)*\buckwidth},{\fstHigh});
          \draw[\linethicknesssnd,\linestyle]  ({\xmax-\i*\buckwidth},{\sndHigh}) -- ({\xmax-(\i+1)*\buckwidth},{\sndHigh});
        };
  \end{tikzpicture}\caption{\label{fig:construction:lb:support:size:D}An instance of distribution $\D_p$ with $p=4/10$.}
\end{figure}
Define now, for any $\eps\in(0,1/6)$, the families of distributions $\mathcal{D}^+$ and $\mathcal{D}^-$ induced the above construction, taking $p$ to be respectively $p^+\eqdef \half$ and $p^-\eqdef \half-6\eps$. Hereafter, by $\D^+$ (resp. $\D^-$), we refer to a distribution from $\mathcal{D}^+$ (resp. $\mathcal{D}^-$) generated randomly as above (we assume further, without loss of generality, that $n\gg 1/\eps^2$):
\begin{align*}
 \expect{ \supp{D^+} } &= n - kp^+ = n\left(1-\frac{p^+}{2}\right) = \frac{3}{4}n \\
 \expect{ \supp{D^{\vphantom{+}-}} } &= n - kp^- = n\left(1-\frac{p^-}{2}\right) = \left(\frac{3}{4}+3\eps\right)n
\end{align*}
and, with an additive Chernoff bound,
\begin{align*}
 \probaOf{ \supp{D^+} \geq \frac{3}{4}n + \frac{\eps}{2} n}  &\leq e^{ -\frac{\eps^2 n}{2} } < \frac{1}{100} \\
 \probaOf{ \supp{D^-} \leq \frac{3}{4}n + \frac{5\eps}{2} n} &\leq e^{ -\frac{\eps^2 n}{2} } < \frac{1}{100}
\end{align*}
We hereafter condition on these events $E^+$ and $E^-$ every time we consider a given $\D^+$ or $\D^-$, and set for convenience $s^+\eqdef \frac{3}{4}(n + 2\eps)$, $s^-\eqdef \frac{3}{4}(n + 10\eps)$.

\paragraph*{Reduction}  We shall once again reduce the problem of distinguishing between \textsf{(a)} a fair coin and \textsf{(b)} an $(\half-6\eps)$-biased coin to the problem of approximating the support size: suppose by contradiction we have a tester $\mathcal{T}$ for the latter problem, making $q=\littleO{\frac{1}{\eps^2}}$ queries on input $\eps$.

Given parameter $\eps\in(0,1/100)$ and $\SAMP_\text{coin}$ access to i.i.d. coin tosses coming from one of those two situations (($p^+=\half$, or $p^-=\half-6\eps$), define a distinguisher \Algo as follows:
\begin{itemize}
  \item after picking an even integer $n\gg 1/\eps^2$, $\Algo$ will maintain a set $C\subseteq[n]\times\{0, \frac{1}{n}, \frac{2}{n}\}$ (initially empty), and run $\mathcal{T}$ as a subroutine with parameter $\eps$;
  \item \EVAL: when $\mathcal{T}$ makes an evaluation query on a point $i\in[n]$
  \begin{itemize}
    \item if $i$ has already been committed to (there is a pair $(i,d_i)$ in $C$), it returns $d_i$;
    \item otherwise, it asks for a sample $b$ from $\SAMP_\text{coin}$, and sets
    \[
      d_i = \begin{cases}
        \frac{1}{n} &\text{ if } b = 0\\
        \frac{2}{n} &\text{ if } b = 1 \text{ and } i\in[k]\\
        0 &\text{ if } b = 1 \text{ and } i\in[n]\setminus [k]
      \end{cases}
    \]
    before adding $(i, d_i)$ and $(n-i, \frac{2}{n}-d_i)$ to $C$ and returning $d_i$.
  \end{itemize}
 \item \SAMP: when $\mathcal{T}$ makes an sampling query, \Algo draws u.a.r. $i\sim[k]$, and then proceeds as in the \EVAL case to get $d_i$ and $d_{n-i}$ (that is, if they are not in $C$, it first generates them from a $\SAMP_\text{coin}$ query and commits to them); and then, it returns $i$ w.p. $(n d_i)/2$, and $n-i$ w.p. $(n d_{n-i})/2$.
\end{itemize}
It is easy to see that the process above exactly simulates \pdfsamp access to a distribution $\D$ generated either according to $\Phi_{p^+}$ or $\Phi_{p^-}$ -- in particular, this is true of the sampling queries because each pair $(i, n-i)$ has same total mass $\frac{2}{n}$ under any such distribution, so drawing from $\D$ is equivalent to drawing uniformly $i\in[k]$, and then returning at random $i$ or $n-i$ according to the conditional distribution of $\D$ on $\{i, n-i\}$.

Furthermore, the number of queries to $\SAMP_\text{coin}$ is at most the number of queries made by $\mathcal{T}$ to \Algo, that is $\littleO{\frac{1}{\eps^2}}$. Conditioning on $E^+$ (or $E^-$, depending on whether we are in case \textsf{(a)} or \textsf{(b)}), the distribution $\D$ has support size at most $s^+$ (resp. at least $s^-$). As the estimate $\hat{s}$ that $\mathcal{T}$ will output will, with probability at least $2/3$, be $\eps n$-close to the real support size, and as $s^--s^+ = 2\eps n$, \Algo will distinguish between cases \textsf{(a)} and \textsf{(b)} with probability at least $2/3-2/100 > 6/10$ -- contradicting the fact that $\bigOmega{1/\eps^2}$ samples are required to distinguish between a fair and a $(\half-6\eps)$-biased coin with this probability.
 \end{proof}

\clearpage
\bibliographystyle{plain}	
\bibliography{references}

\appendix
\section{Chernoff Bounds}\label{appendix:chernoff}

\begin{theorem} \label{thm:cb}
Let $Y_1,\dots,Y_m$ be $m$ independent random variables that take on values
in $[0,1]$, where $\expect{Y_i} = p_i$, and
$\sum_{i=1}^m p_i = P$. For any $\gamma \in (0,1]$ we have
\begin{align}
\label{eq:additive-chernoff}
\text{(additive bound)}& &
\probaOf{ \sum_{i=1}^m Y_i   > P+ \gamma m },
\
\probaOf{ \sum_{i=1}^m Y_i  < P - \gamma m }
 &\leq \exp(-2 \gamma^2 m)\\
\label{eq:cher-ub}
\text{(multiplicative bound)}&&
\probaOf{ \sum_{i=1}^m Y_i > (1+\gamma)P } &< \exp(-\gamma^2 P/3)\\
\text{and}\notag\\
\label{eq:cher-lb}
\text{(multiplicative bound)}&&
\probaOf{ \sum_{i=1}^m Y_i < (1-\gamma)P } &< \exp(-\gamma^2 P/2).
\end{align}
\end{theorem}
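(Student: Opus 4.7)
The plan is to prove all three inequalities through a single unified tool, the Cram\'er--Chernoff (exponential Markov) method. Writing $S=\sum_{i=1}^m Y_i$, I start from the observation that for any $t>0$ and any threshold $a$,
\[ \probaOf{ S \geq a } = \probaOf{ e^{tS} \geq e^{ta} } \leq e^{-ta}\expect{ e^{tS} } = e^{-ta}\prod_{i=1}^m \expect{ e^{tY_i} }, \]
where the last equality uses independence of the $Y_i$. Each of the three bounds then reduces to choosing an appropriate upper estimate of the per-variable MGF $\expect{e^{tY_i}}$ and optimizing over $t$; lower-tail statements are handled symmetrically by replacing $Y_i$ with $-Y_i$ (equivalently $1-Y_i$, which also lies in $[0,1]$).

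For the additive bound \eqref{eq:additive-chernoff}, I would invoke Hoeffding's lemma: for a zero-mean random variable $Z$ supported on an interval of length $\ell$, $\expect{e^{tZ}} \leq e^{t^2\ell^2/8}$. Applied to $Z_i=Y_i-p_i$, which is supported on an interval of length at most $1$, this yields $\expect{e^{t(Y_i-p_i)}}\leq e^{t^2/8}$. Taking the product and applying the generic Chernoff template gives $\probaOf{ S\geq P+\gamma m }\leq \exp(-t\gamma m + mt^2/8)$; the right-hand side is minimized at $t=4\gamma$ and produces the claimed bound $e^{-2\gamma^2 m}$. The lower-tail statement follows verbatim after swapping $Y_i$ with $1-Y_i$.

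For the multiplicative bounds \eqref{eq:cher-ub}--\eqref{eq:cher-lb}, I would bound each MGF using the convexity of $x\mapsto e^{tx}$ on $[0,1]$: since $e^{tx}\leq 1-x+xe^t$ for $x\in[0,1]$, we get $\expect{e^{tY_i}} \leq 1+p_i(e^t-1) \leq \exp(p_i(e^t-1))$ by $1+u\leq e^u$. Multiplying over $i$ and applying the exponential Markov inequality,
\[ \probaOf{ S \geq (1+\gamma)P } \leq \exp\!\bigl( -t(1+\gamma)P + P(e^t-1)\bigr); \]
the choice $t=\ln(1+\gamma)$ reduces this to $\exp(-P\,[(1+\gamma)\ln(1+\gamma) - \gamma])$. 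The clean exponent then follows from the elementary numerical inequality $(1+\gamma)\ln(1+\gamma)-\gamma \geq \gamma^2/3$ valid on $\gamma\in(0,1]$, which I would verify either by comparing Taylor expansions or by checking the sign of the derivative of the difference. For \eqref{eq:cher-lb}, I run the same argument with $-Y_i$ and optimize over $t>0$, obtaining exponent $-P\,[(1-\gamma)\ln(1-\gamma)+\gamma]$, which is at most $-\gamma^2 P/2$ via the analogous Taylor-series comparison $(1-\gamma)\ln(1-\gamma)+\gamma \geq \gamma^2/2$.

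The argument is entirely classical and presents no real obstacle; the only step that demands a little care is the verification of the two scalar inequalities that produce the clean $\gamma^2/3$ and $\gamma^2/2$ constants in the exponents. Both are standard calibrations available by expanding $(1\pm\gamma)\ln(1\pm\gamma)$ as power series in $\gamma$, so I would simply state them and give the one-line Taylor justification, making this appendix a reference for the paper rather than a technically novel proof.
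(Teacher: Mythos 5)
Your proof is correct, and there is nothing to compare it against: the paper states Theorem~\ref{thm:cb} in its appendix as a standard reference fact, with no proof given. Your argument is exactly the classical Cram\'er--Chernoff route these statements implicitly invoke: Hoeffding's lemma with the optimal choice $t=4\gamma$ does yield the exponent $-2\gamma^2 m$, and the MGF bound $\expect{e^{tY_i}}\leq \exp(p_i(e^t-1))$ with $t=\pm\ln(1\pm\gamma)$ reduces the multiplicative bounds to the scalar inequalities $(1+\gamma)\ln(1+\gamma)-\gamma\geq\gamma^2/3$ and $(1-\gamma)\ln(1-\gamma)+\gamma\geq\gamma^2/2$ on $(0,1]$, both of which check out (e.g.\ for the first, the difference vanishes to second order at $0$ and its derivative $\ln(1+\gamma)-2\gamma/3$ stays nonnegative on $(0,1]$; for the second, the derivative is $-\ln(1-\gamma)-\gamma\geq 0$). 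The only cosmetic point is that the displayed multiplicative bounds are stated with strict inequality, which your estimates also deliver since the scalar inequalities are strict for $\gamma>0$.
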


\noindent The following extension of the multiplicative bound is useful when only upper and/or lower bounds on $P$ are known:
\begin{corollary}\label{cor:cb-upperlower}
In the setting of Theorem~\ref{thm:cb} suppose that
$P_L \leq P \leq P_H.$ Then for any $\gamma \in (0,1]$, we have
\begin{eqnarray}
\probaOf{ \sum_{i=1}^m Y_i > (1+\gamma)P_H } &<& \exp(-\gamma^2 P_H/3)
\label{eq:multCB-upper2}\\
\probaOf{ \sum_{i=1}^m Y_i < (1-\gamma)P_L } &<& \exp(-\gamma^2 P_L/2)
\label{eq:multCB-lower}
\end{eqnarray}
\end{corollary}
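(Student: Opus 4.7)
\noindent\textbf{Proof proposal for Corollary~\ref{cor:cb-upperlower}.}

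The plan is to derive each of the two inequalities from Theorem~\ref{thm:cb}, but the two tails require different tricks because the quantity $\exp(-\gamma^2 P/\cdot)$ is \emph{decreasing} in $P$, so naive monotonicity in $P$ only works on one side.

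For the lower tail, I would argue by direct monotonicity. Since $P_L \leq P$, we have $(1-\gamma)P_L \leq (1-\gamma)P$, so the event $\{\sum_i Y_i < (1-\gamma)P_L\}$ is contained in the event $\{\sum_i Y_i < (1-\gamma)P\}$. Theorem~\ref{thm:cb} (applied with the actual sum of means $P$, and the same $\gamma\in(0,1]$) then gives
\[
  \probaOf{ \sum_{i=1}^m Y_i < (1-\gamma)P_L } \;\leq\; \probaOf{ \sum_{i=1}^m Y_i < (1-\gamma)P } \;<\; \exp(-\gamma^2 P/2) \;\leq\; \exp(-\gamma^2 P_L/2),
\]
where the last step again uses $P \geq P_L$. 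This proves \eqref{eq:multCB-lower}.

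For the upper tail \eqref{eq:multCB-upper2}, the corresponding direct inclusion $\{\sum Y_i > (1+\gamma)P_H\}\subseteq\{\sum Y_i > (1+\gamma)P\}$ goes the wrong way for the desired exponent (since $\exp(-\gamma^2 P/3) \geq \exp(-\gamma^2 P_H/3)$ when $P\leq P_H$). The key idea is therefore a \emph{padding argument}: enlarge the family $(Y_i)_{i\leq m}$ by adjoining independent auxiliary $[0,1]$-valued random variables $Y_{m+1},\dots,Y_{m'}$ (for instance Bernoullis with small parameters) chosen so that $\sum_{i=1}^{m'}\expect{Y_i}$ equals exactly $P_H$. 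This is possible because $P_H - P\geq 0$, and any non-negative real can be written as a finite sum of reals in $[0,1]$. Applying Theorem~\ref{thm:cb} to the extended family then yields
\[
  \probaOf{\sum_{i=1}^{m'} Y_i > (1+\gamma)P_H} < \exp(-\gamma^2 P_H/3),
\]
and since $\sum_{i=1}^m Y_i \leq \sum_{i=1}^{m'} Y_i$ almost surely, the event $\{\sum_{i=1}^m Y_i > (1+\gamma)P_H\}$ is contained in the event on the left, giving \eqref{eq:multCB-upper2}.

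The main (and only real) obstacle is the asymmetry noted above: one cannot replace $P$ by $P_H$ in the upper-tail Chernoff bound merely by a containment of events, because the exponent itself depends on the sum of expectations. The padding trick bypasses this neatly by changing the problem to one with sum of means exactly $P_H$, so that the hypothesis $\gamma\in(0,1]$ of Theorem~\ref{thm:cb} transfers verbatim and no reparametrization of $\gamma$ (which would risk violating $\gamma\leq 1$) is needed.
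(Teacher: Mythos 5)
Your proof is correct. Note that the paper itself states Corollary~\ref{cor:cb-upperlower} in the appendix without any proof, treating it as a standard extension of Theorem~\ref{thm:cb}, so there is no in-paper argument to compare against; judged on its own, your argument is sound. The lower-tail case is exactly the direct monotonicity chain one would expect (event containment from $(1-\gamma)P_L\leq(1-\gamma)P$, then $\exp(-\gamma^2P/2)\leq\exp(-\gamma^2P_L/2)$ since $P\geq P_L$), and you correctly identify that the upper tail cannot be handled the same way because the exponent degrades in the wrong direction. Your padding trick — adjoining independent $[0,1]$-valued variables with total mean $P_H-P$ so the augmented sum of means is exactly $P_H$, applying the theorem to the augmented family, and using $\sum_{i\leq m}Y_i\leq\sum_{i\leq m'}Y_i$ almost surely — is a valid black-box reduction, and it has the virtue of never touching the hypothesis $\gamma\in(0,1]$. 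The more customary derivation instead reopens the moment-generating-function proof: for $t>0$ one has $\prod_{i=1}^m\expect{e^{tY_i}}\leq\exp\bigl((e^t-1)P\bigr)\leq\exp\bigl((e^t-1)P_H\bigr)$, so the entire Chernoff optimization goes through with $P$ replaced by $P_H$ (and symmetrically with $P_L$ for the lower tail); that route is marginally less self-contained since it requires looking inside the proof of Theorem~\ref{thm:cb}, whereas yours uses it only as a stated fact. Either argument is acceptable; yours is complete as written, with the only implicit point worth making explicit being that the padding variables are taken independent of the original family (e.g., constants or Bernoullis on a product space), which is immediate.
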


\end{document}